\documentclass[11pt]{article}
\usepackage[margin=1in]{geometry}
\usepackage[T1]{fontenc}
\usepackage{lmodern}
\usepackage{microtype}
\usepackage{amsmath,amssymb,amsthm}
\usepackage{graphicx}
\graphicspath{{figures/}}
\usepackage{booktabs,array,multirow,longtable}
\usepackage[font=small,labelfont=bf]{caption}
\usepackage{url}
\usepackage[hidelinks]{hyperref}
\usepackage{enumitem}
\setlist{nosep,leftmargin=*}
\newtheorem{lemma}{Lemma}
\newtheorem{proposition}{Proposition}
\theoremstyle{definition}
\newtheorem{definition}{Definition}
\newcommand{\Grant}{\mathrm{Grant}}
\newcommand{\Sg}{S_{\mathrm{unsafe}}}
\newcommand{\Sres}{S_{\mathrm{res}}}
\newcommand{\Hreal}{H_{\mathrm{real}}}
\newcommand{\Habs}{H_{\mathrm{abs}}}
\newcommand{\epsm}{\varepsilon^{-}_{H}}
\newcommand{\epsp}{\varepsilon^{+}_{H}}
\newcommand{\rC}{r_{C}}
\newcommand{\lC}{\ell_{C}}
\newcommand{\hardDeny}{\mathrm{hardDeny}}

\begin{document}

\title{Auditing the Hand-Written Inputs of Learned  Access Control}
\author{Arsalan Jalil Sarghaleh\thanks{This work has been submitted to the IEEE for possible publication. Copyright may be transferred without notice, after which this version may no longer be accessible.}\\
Department of Electrical Engineering\\ Sharif University of Technology, Tehran, Iran\\
\texttt{arsalan.jalil@ee.sharif.edu}
\and
Mehdi Hazrati\\
Department of Electrical and Computer Engineering\\ University of Victoria, Victoria, BC, Canada\\
\texttt{smhazrati@uvic.ca}}
\date{}
\maketitle

\begin{abstract}
Learned access control is typically deployed behind two hand-built artefacts:
a \emph{hard-deny set} enforced by a small trusted shield, and a \emph{region
map} that decides which requests mined rules answer and which the learner
answers. Verifying the shield or measuring the learner validates neither. We
audit both on the five public ABAC Lab policies. First, we formalise hard-deny
soundness and check it exhaustively over relations of up to 794{,}250 tuples. Clauses that faithfully transcribe an
explicit restriction are unsound at rates of 0.50--0.70 across four
elicitors---an author and three blind language-model elicitors---because prose
scopes restrictions to a section while the grant relation is a disjunction over
all rules: 97\% of counterexamples are granted by other rules. Shipped
unchecked, these deny sets would withhold 16--78\% of everything the policies
grant; certified, they withhold nothing. Vacuity, a second silent failure, also
occurs. Second, hybrid routing needs a two-class,
abstaining rule set that ABAC miners do not emit, and ranking a
mixed-effect candidate pool under a cap is one-class in all 24 cells for the
two commonest ranking keys. Third, for additive losses the
hybrid-versus-learner effect factorises into covered mass times a rule--learner
risk contrast that must be reported on both false-non-grant and false-permit
loss: on two real policies, most selection rules give opposite verdicts on the
two losses, and the one favourable on both holds for one policy, one learner
configuration and a band of noise levels. A released artefact regenerates and
checks every reported number.
\end{abstract}

\noindent\textbf{Keywords:} Attribute-based access control, policy mining, runtime enforcement, reference
monitor, conformal prediction, trusted computing base, policy analysis.

\section{Introduction}
\label{sec:intro}
Attribute-based access control evaluates rules against subject,
object, operation and environment attributes~\cite{hu2014abac}, and in practice
the governing policy is increasingly mined from logs rather than hand-written
\cite{xu2015mining,cotrini2018sparse,karimi2022automatic,shang2024hierarchical}.
Mined rules are approximations---no mining algorithm produces a completely
accurate policy~\cite{guo2025onthefly}---so they leave requests on which the
rules are silent or self-contradictory. A learned classifier can resolve exactly
those requests, and learned components are increasingly placed on the decision
path~\cite{aboukadri2024survey,nobi2025sok,nobi2022deep,cappelletti2019quality}. Because a learner's
failures are not intrinsically bounded---miscalibration~\cite{guo2017calibration},
distribution shift, poisoning and evasion~\cite{biggio2018wild}, and demonstrated
adversarial vulnerability of ML-based access control
specifically~\cite{llamas2025adversarial}---the standard architectural response
is a reference monitor~\cite{anderson1972} in its modern form: a small,
correct-by-construction runtime shield
\cite{schneider2000enforceable,ligatti2009runtime,bloem2015shield,alshiekh2018safe}
interposed before the actuator, with Simplex-style runtime assurance as
antecedent~\cite{sha2001simplicity}.

This paper is about what that architecture does \emph{not} contain. The shield
is verifiable and the learner is measurable, but the composed system also
depends on two artefacts that are neither:

\begin{itemize}
\item a \textbf{hard-deny set} $H$---a list of clauses, written by a person
reading the policy, that the shield refuses on unconditionally; and
\item a \textbf{region map}---the partition of the request space into requests
the rules answer and requests the learner answers, produced by a mining pipeline
whose selection step is a design choice.
\end{itemize}

Verifying the shield says nothing about whether $H$ is correct: a verified
authorizer over a wrong deny set is a correct authorizer producing wrong
denials. Measuring the learner says nothing about whether the region map is real:
if the mining pipeline's selection step is degenerate, the map is an artefact and
every downstream measurement inherits it. Both artefacts are hand-built, both sit
inside the loop, and neither is checked by anything else in the stack.

\begin{figure}[t]
\centering
\includegraphics[width=0.85\linewidth]{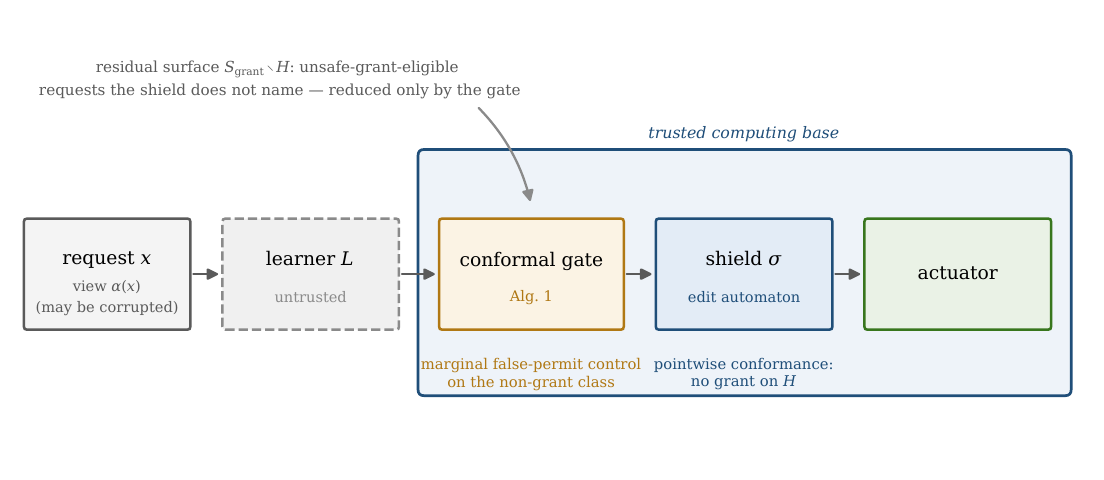}
\caption{The composed architecture, with the two hand-built artefacts this paper
audits marked. An untrusted learner proposes an action; a deterministic shield
admits no grant on the named hard-deny set $H$ and is the identity outside it.
The shield's behaviour on $H$ is a refinement property and is not in doubt
(Lemma~\ref{lem:conform}). What is in doubt is \emph{which requests are in $H$}
(Section~\ref{sec:c1}) and \emph{which requests reach the learner at all}
(Sections~\ref{sec:c2}--\ref{sec:c3}). Neither is produced or checked by any
verified component of the stack.}
\label{fig:arch}
\end{figure}

Three questions follow, and all three are answerable on public policy artefacts
without operational access logs.

\subsection*{Contributions}

\noindent\textbf{C1: Prose-derived deny sets are frequently unsound, for a
measurable structural reason, whoever writes them (Section~\ref{sec:c1}).}
We formalise soundness of $H$ against the grant relation, give an exhaustive
verification procedure, and apply it to all five public ABAC Lab policies, none
of which we authored. A candidate clause can be unsound for two unrelated
reasons, so we stratify by provenance: a clause that transcribes an
\emph{explicit} restriction (``only $X$ may \textsc{verb}'') isolates the scope
mismatch between local prose and a global grant disjunction, whereas a clause
that reads exclusivity into a plain permission is invalid before any policy is
consulted. The stratification is reproduced exactly by a fixed lexical rule, so
it does not depend on any verdict. In the explicit-restriction stratum the
author's clauses are unsound at $16/23$, and \emph{three blind language-model
elicitors} that never saw a rule or a verdict are unsound at $0.50$--$0.66$
(63/109 pooled); 15 restriction sentences fail for every elicitor that
formalises them. Attributing each of 137{,}454 counterexamples to the rules that
grant it confirms the mechanism directly: $97\%$ are granted by rules other than
the clause's own, $86\%$ by rules in other subsections. Shipped unchecked, the
four candidate sets would withhold $16$--$78\%$ of everything the policies
grant. We pair soundness with a \emph{non-vacuity} check, because 12 of the
independent clauses cover no request at all and pass soundness trivially.

\noindent\textbf{C2: A hybrid region map needs a kind of rule set that policy
miners do not produce, and the obvious way to build one is degenerate
(Section~\ref{sec:c2}).}
ABAC miners emit \emph{total} policies---permit rules under default deny, or
permit and deny rules under a conflict-resolution order---so a request that no
rule matches is decided, not abstained on, the informative region is empty, and
there is no routing decision to make. Hybrid routing presupposes a two-class,
\emph{abstaining} rule set. A Xu--Stoller-style greedy permit-cover arm makes
this concrete. We then show that the obvious construction---ranking a
mixed-effect candidate pool under a cap---is one-class in every cell for the two
commonest ranking keys across four policies and three cap sizes, because
minority-class support is $3.3$ to $72.7$ times below majority-class support.
We also show that a minimum confidence above $1-\varepsilon$ excludes correct
rules with a probability that sharpens exponentially in support, and test both
consequences under three noise processes that violate that result's
assumptions, reporting where the second fails to transfer.

\noindent\textbf{C3: Hybrid routing factorises exactly, and the routing verdict
cannot be read from one loss (Section~\ref{sec:c3}).}
For additive losses the hybrid-versus-learner effect equals covered mass times
the contrast between rule risk and learner risk on the covered region. The
consequence is a reporting requirement: the contrast is a different number for
each additive loss, because false-non-grant conditions on the true-grant
subpopulation and false-permit on the non-grant subpopulation, and a one-class
rule set is trivially optimal on whichever loss its single class cannot incur.
On E-Document, four of five candidate-selection rules give opposite routing
verdicts on the two losses, and only class-aware selection favours rules on
both, with contrasts $-0.033$ $[-0.049,-0.020]$ and $-0.115$ $[-0.127,-0.104]$.
That verdict holds under one of four learner configurations and only for a band
of noise levels, and it does not replicate on Workforce, where every selection
rule trades false non-grants for false permits. A study that reports one loss
can therefore reach either conclusion.

\noindent\textbf{Artefact.} An open analysis package regenerates every number in
this paper from the public policy files with one command and then checks each
value quoted in the manuscript against the regenerated results; it also
includes the multi-elicitor harness, the frozen elicitations and the exact
prompts used.

\subsection*{Scope and organisation}
The audit uses public policy artefacts and no operational deployment logs. C1
and C2 are established on public third-party policies; C3 is established
analytically and measured on two real policies. Section~\ref{sec:eval} measures
the composed system on the same two policies and then on an authored synthetic
policy, the only benchmark in our set with a five-action space and break-glass
semantics; the synthetic part demonstrates the measurement protocol and supports
no robustness claim. Section~\ref{sec:rw} reviews related work and
Section~\ref{sec:model} fixes the system and threat model.
Sections~\ref{sec:c1}--\ref{sec:c3} present C1--C3, Section~\ref{sec:eval} the
composed system, and Section~\ref{sec:threats} the threats to validity, including
what is assumed rather than established about system security
(Section~\ref{sec:securityscope}).

\section{Background and Related Work}
\label{sec:rw}

\subsection{Policy mining produces approximations}
ABAC rule mining recovers flat rule sets from logs
\cite{xu2015mining,cotrini2018sparse,iyer2018mining}, with extensions to
automatic extraction from access logs~\cite{karimi2022automatic}, hierarchical
clustering with relationship extraction~\cite{shang2024hierarchical}, learning
relationship-based policies by querying black-box
systems~\cite{iyer2022learning}, frameworks that combine data mining,
statistical and machine-learning techniques~\cite{abujabal2020polisma}, and
mining from massive, noisy logs~\cite{zhang2026domain}. Classifiers have also
been trained directly on access logs as policy surrogates and assessed for the
quality of the policies they imply~\cite{cappelletti2019quality,nobi2022deep}.
Guo \emph{et al.}~\cite{guo2025onthefly} state the premise directly: no mining
algorithm produces a completely accurate policy, and human revision is required.

Two properties of this literature matter for C2 and are easy to overlook.
First, the mined object is a \emph{total} policy. Most formalisms are
\emph{default-deny plus permit rules}, so a request no rule matches is denied,
not abstained on; miners that also learn negative
rules~\cite{iyer2018mining} still resolve every request through a default and a
conflict-resolution order. Second, selection is driven by coverage of the permit
tuples---Xu and Stoller's greedy phase takes the candidate covering the most
still-uncovered permits, and Cotrini \emph{et al.}'s Rhapsody uses a
sparse-log-specific quality function---not by ranking a mixed-effect candidate
pool under a size cap. The hybrid architecture needs an object these algorithms
do not emit, and Section~\ref{sec:c2} examines what happens when a pipeline
builds it the obvious way.

The underlying difficulty is not new in data mining. A single global minimum
support cannot simultaneously admit frequent and rare patterns; Liu \emph{et
al.}~\cite{liu1999msapriori} named this the \emph{rare item problem} and proposed
per-item minimum supports as the remedy, and the imbalance literature has studied
minority-class rule discovery since~\cite{weiss2004rarity}. C2's mechanism is
that problem instantiated in access control, and we make the connection explicit
rather than presenting it as a discovery. Two things are specific to this
setting. The minority class here is \emph{the permit class}, so the pathology
does not merely lose accuracy: it produces a rule set that denies everything,
which is silently plausible in a security context in a way that an all-negative
classifier is not elsewhere. And the degeneracy propagates into an architectural
decision: Section~\ref{sec:c3} shows the same choice moves the routing contrast
across an order of magnitude.

\subsection{Policy analysis, and what it is pointed at}
\label{sec:rw-analysis}
Deciding $\exists x:\ c(x)\wedge\Grant(P^{*}(x))$ over a finite policy is
property checking, and the access-control literature has done it for two
decades. Margrave performs change-impact analysis and verification queries over
XACML~\cite{fisler2005margrave}; Hughes and Bultan reduce XACML policy properties
to satisfiability~\cite{hughes2008xacml}; Turkmen \emph{et al.} formalise XACML
analysis in SMT~\cite{turkmen2017smt}. Industrially, Zelkova compiles AWS access
policies---\emph{including} their \texttt{Deny} statements---to SMT and answers
comparative permissiveness queries at
scale~\cite{backes2018zelkova}, Quacky \emph{counts} the requests a policy
permits via model counting to quantify
permissiveness~\cite{eiers2022quacky}, and Cedar supplies an expressive,
analyzable authorization language with a verified validator and a symbolic
compiler to SMT~\cite{cutler2024cedar}.

We claim no algorithmic novelty for the C1 check. Our certification
procedure is a decision problem these tools already solve, on smaller inputs, and
where the request space is not enumerable the check becomes a satisfiability
query that should be discharged by one of them. What is new is the \emph{object}
the check is pointed at and the resulting measurement. The verification
literature checks properties an analyst asserts about a policy; C1 checks the
\emph{hard-deny clause set that a trusted enforcement layer will execute},
elicited from the policy's prose, and reports how often such clauses over-claim
against the very relation they are meant to constrain---stratified by how the
clause was derived, which is what makes the rate interpretable. To
our knowledge no prior work reports a soundness rate for prose-derived deny sets
on public policy artefacts, nor identifies vacuity as its opposite failure mode.

\subsection{Deny rules, prose translation, and zero trust}
Iyer \emph{et al.}~\cite{iyer2022expressive} show that negated conditions and
negative authorizations are not interchangeable and that the choice constrains
what a policy model can express. Their question is expressiveness; ours is
whether the clauses someone actually writes into a negative-authorization set
are sound against the grant relation they are supposed to constrain.

The step C1 audits---a human reading policy prose and writing an enforceable
predicate---is increasingly automated. Prior work extracts ABAC attributes
and models from natural-language policies~\cite{alohaly2019nlp,abdelgawad2023synthesizing}
and ABAC rules from access logs~\cite{karimi2022automatic}, and large language
models are now used to extract enforceable ABAC policies from
text~\cite{yang2025llmabac}, to synthesise access-control
policies~\cite{vatsa2025synthesizing}, and to mine ABAC
policies~\cite{bui2026llmmining}. These works evaluate generated policies
against reference policies or by testing; none certifies a generated
\emph{deny} set exhaustively against the grant relation it is meant to
constrain. Section~\ref{sec:independent} does that for three model elicitors,
under the same freeze-then-certify protocol as the human elicitor.

Zero-trust architectures approach complete mediation from the deployment side.
Al-hammuri \emph{et al.}~\cite{alhammuri2023zerotrust} score trust from
contextual features across authentication, encryption, logging and
authorization. The premise we share is that no request inherits authority from a
prior decision. The difference is where the guarantee is located: a trust score
is a graded, learned judgement, and by our own argument such a component belongs
\emph{outside} the trusted computing base. Buccafurri \emph{et
al.}~\cite{buccafurri2024enforcing} show that interacting authentication systems
implicitly extend each other's policies, so a system adopting strong policies can
be unexpectedly weak; we take this seriously in stating our composition result as
regime-indexed rather than unconditional.

\subsection{Residual over-granting, selective prediction, benchmarks}
Closest in spirit to our residual-surface analysis is
OPMonitor~\cite{wang2025opmonitor}, which observes that policy verification is
bounded by the comprehensiveness of predefined verification goals, so
over-granted permissions outside those goals remain undetected, and monitors
residual over-granting against a log-inferred baseline. Their unverified
permissions correspond structurally to our residual surface; the treatments are
complementary. Madkaikar \emph{et al.}~\cite{madkaikar2025queuing} analyse
dynamic ABAC performance in a queuing-theoretic framework; we do not measure
enforcement latency, and refer to that line of work for it.

Selective classification with a reject option~\cite{chow1970,geifman2017selective}
trades coverage for accuracy; conformal prediction supplies distribution-free
finite-sample guarantees under
exchangeability~\cite{vovk2005alrw,shafer2008tutorial}, with label-conditional
variants~\cite{vovk2012conditional,angelopoulos2023gentle} and conformal risk
control for bounded losses~\cite{angelopoulos2024crc}. Our use of these tools is
instrumental, and the marginal-versus-conditional distinction we rely on is
theirs, not ours. Label-noise learning~\cite{natarajan2013noisy,frenay2014survey}
supplies the setting for Proposition~\ref{prop:ceiling}, whose content is the
consequence for a \emph{confidence threshold} rather than for a classifier.

Real access-control policies and logs are rarely public, and ML-based access
control specifically suffers a shortage of public
datasets~\cite{aboukadri2024survey}. The Xu--Stoller case
studies~\cite{xu2015mining} and the ABAC Lab consolidation~\cite{bui2025abaclab},
which adds instances derived from real-world case studies including the KU Leuven
e-document study~\cite{decat2014edocument}, are the community's response. C1 and
C2 are answers to the data problem rather than victims of it: both are
established on public policy artefacts, and neither requires a deployment log.
Break-glass access is a recognised clinical
requirement~\cite{brucker2009breakglass,rostad2006audit}, and supplies the
semantics of the emergency branch of Section~\ref{sec:eval}.

\section{System Model and Threat Model}
\label{sec:model}

\subsection{Requests, actions, policies}
A request is $x\in\mathcal{X}$ over subject, object, operation and environment
attributes. The action space is $\mathcal{A}=\{\textsc{deny},\textsc{permit},
\textsc{monitor},\textsc{temporary},\textsc{escalate}\}$, with
\begin{equation}
\Grant(a)\iff a\in\{\textsc{permit},\textsc{monitor},\textsc{temporary}\},
\label{eq:grant}
\end{equation}
so \textsc{deny} and \textsc{escalate} are non-granting. On the public policies,
which are binary, $\mathcal{A}$ degenerates to $\{\textsc{deny},
\textsc{permit}\}$. $P^{*}$ is the latent policy and the sole label source. $R$
is a mined rule set inducing $\mathrm{region}(x)\in\{\text{covered},\text{gap},
\text{conflict}\}$: \emph{covered} when the matching rules of $R$ agree on one
effect, \emph{gap} when none matches, \emph{conflict} when two or more disagree.
$L$ is the untrusted learner, and $\alpha(x)$ the possibly corrupted view of $x$
that the enforcement layer receives.

\begin{definition}[Hard-deny set]
$H=\{x:\hardDeny(x)\neq\bot\}$, where $\hardDeny$ is a disjunction of clauses
each of which independently triggers hard denial, with codomain
$\{\textsc{deny},\textsc{escalate}\}$.
\end{definition}

\begin{definition}[Soundness of $H$]
\label{def:sound}
$H$ is \emph{sound with respect to $P^{*}$} iff $\forall x\in H,\ P^{*}(x)\in
\{\textsc{deny},\textsc{escalate}\}$, equivalently $\nexists x:\ x\in H\wedge
\Grant(P^{*}(x))$.
\end{definition}

Write $\Hreal(x)=\mathbf{1}[x\in H]$ for membership of the true request and
$\Habs(\alpha(x))=\mathbf{1}[\hardDeny(\alpha(x))\neq\bot]$ for the hard-deny
predicate evaluated on the view the enforcement layer actually receives. All
probabilities are over a request $X$ drawn from the deployment distribution.

\begin{definition}[Directional abstraction errors]
$\epsm=\Pr[\Habs(\alpha(X))=0\mid\Hreal(X)=1]$ and
$\epsp=\Pr[\Habs(\alpha(X))=1\mid\Hreal(X)=0]$.
\end{definition}
Only $\epsm$ can produce an unsafe grant; $\epsp$ forces denials and appears on
the availability side. The enforcement layer is
\begin{equation}
\sigma(x,a)=\begin{cases}
\hardDeny(\alpha(x)) & \Habs(\alpha(x))=1,\\
a & \text{otherwise,}
\end{cases}
\label{eq:sigma}
\end{equation}
with hard deny evaluated before any other branch. Let $\Sg=\{x:\neg\Grant(P^{*}(x))\}$
be the set of requests on which any grant is unsafe. Residual quantities are
reported on the \emph{residual surface} $\Sres=\Sg\setminus H$, the unsafe
requests that $H$ does not name.

\subsection{Assumptions and trust boundaries}
\textbf{(A1)} $P^{*}$ defines truth (evaluation only; not used in any enforcement
result). \textbf{(A2)} Pointwise clean-input preservation, $\forall x,\
\Habs(\alpha(x))=\Hreal(x)$; this is strictly stronger than $\epsm=0$ and is
dropped in Proposition~\ref{prop:directional}. \textbf{(A3)} Shield code is
correct and executes on every request.

The adversary may control the learner entirely, including its training data and
outputs, and may manipulate request attributes that are not attested. Two trust
boundaries must be distinguished. The \emph{deterministic safety TCB}---shield
code, hard-deny clause set, attested attribute store, actuator path---is what
Lemma~\ref{lem:conform} and Proposition~\ref{prop:directional} depend on. Any
statistical gate placed upstream of $\sigma$ is \emph{not} in it: because
$\sigma$ runs last, a compromised gate cannot produce a grant on $H$.
Table~\ref{tab:tcb} states the status of each component and, in the column that
matters when deciding what to verify, what a compromise of it buys the adversary.

\begin{table}[t]
\centering
\caption{Components and trust status. The hard-deny clause set is in the safety
TCB, and no amount of verification of the shield or of an authorization language
detects an error in it. Section~\ref{sec:c1} audits exactly this row.}
\label{tab:tcb}
\footnotesize
\setlength{\tabcolsep}{3pt}\begin{tabular}{@{}>{\raggedright\arraybackslash}p{3.0cm}>{\raggedright\arraybackslash}p{2.6cm}>{\raggedright\arraybackslash}p{3.6cm}>{\raggedright\arraybackslash}p{5.0cm}@{}}
\toprule
component & status & protects & failure consequence \\
\midrule
learner $L$ & untrusted & --- & none on $H$; bounded by $\sigma$ \\
shield $\sigma$ & safety TCB & conformance on $H$ & unbounded grants \\
\textbf{hard-deny clauses} & \textbf{safety TCB} & \textbf{soundness of $H$} &
\textbf{wrong denials (\S\ref{sec:c1})} \\
attested store & safety TCB & attribute integrity & $\epsm>0$ \\
actuator path & safety TCB & complete mediation & shield bypass \\
statistical gate & guarantee only & marginal FP rate & bound void; $H$ unaffected \\
audit log & recording only & completeness & undetected break-glass \\
\bottomrule
\end{tabular}
\end{table}

\subsection{What the trusted layer does and does not bound}
\label{sec:theory}

\begin{lemma}[Conformance and residual invariance]
\label{lem:conform}
Under (A2) and (A3): (i) for all $x\in H$ and all $a\in\mathcal{A}$,
$\neg\Grant(\sigma(x,a))$; (ii) for all $x\in\Sres$ and all $a$,
$\sigma(x,a)=a$, so the unsafe-grant rate on $\Sres$ is identical with and
without $\sigma$.
\end{lemma}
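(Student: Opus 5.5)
The argument unfolds the definition of $\sigma$ in \eqref{eq:sigma}, using (A2) to replace the abstracted predicate $\Habs(\alpha(x))$ by the true membership $\Hreal(x)$. (A3) justifies reasoning about $\sigma$ as written: the shield code is correct and runs on every request, with the hard-deny branch evaluated first. Both parts are then case analyses on the branch of \eqref{eq:sigma} that fires.

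For (i), fix $x\in H$ and any $a\in\mathcal{A}$. Then $\Hreal(x)=1$, and by (A2) $\Habs(\alpha(x))=1$, so the first branch fires and $\sigma(x,a)=\hardDeny(\alpha(x))$. Since $\Habs(\alpha(x))=1$ means $\hardDeny(\alpha(x))\neq\bot$, the definition of the hard-deny set places this value in $\{\textsc{deny},\textsc{escalate}\}$. By \eqref{eq:grant} neither action is granting, so $\neg\Grant(\sigma(x,a))$. The only point needing care is that $\hardDeny$ is applied to $\alpha(x)$ rather than $x$. This is harmless: the codomain restriction holds for every input on which $\hardDeny$ is defined and not $\bot$, so the value is non-granting whatever $\alpha(x)$ is.

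For (ii), fix $x\in\Sres=\Sg\setminus H$. Then $x\notin H$, so $\Hreal(x)=0$, and (A2) gives $\Habs(\alpha(x))=0$. The second branch fires and $\sigma(x,a)=a$ for every $a$. Hence, for any learner (even adversarial) producing action $a=L(\cdot)$ on $x$, the enforced action equals the proposed one pointwise on $\Sres$. The events $\{\Grant(\sigma(X,L))\wedge X\in\Sres\}$ and $\{\Grant(L)\wedge X\in\Sres\}$ therefore coincide. Their probabilities under the deployment distribution, and any conditional rate given $X\in\Sres$, are equal with and without $\sigma$.

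I expect no real obstacle; the content is definitional. The step deserving emphasis is the use of the pointwise assumption (A2) rather than the weaker $\epsm=0$. In (i), $\epsm=0$ alone would give $\Habs=1$ only almost surely on $H$, so the "for all $x$" claim would fail on a null set. In (ii), $\epsm=0$ says nothing about $\epsp$, so $\sigma$ might fire on part of $\Sres$ and alter the residual rate. The write-up should make clear that (A2) controls both directions at once, so the later relaxation in Proposition~\ref{prop:directional} is what separates them.
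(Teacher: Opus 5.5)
Your proposal is correct and follows the paper's proof. Both arguments unfold \eqref{eq:sigma} using (A2). Both observe that (i) needs only that $\hardDeny(\alpha(x))$ be non-granting, not that it equal $\hardDeny(x)$. Both identify (ii) as the step that requires the absence of hard-deny false positives. Your explicit event-equality argument for the rate claim, and your remark on why pointwise (A2) is needed rather than $\epsm=0$, spell out what the paper leaves implicit.
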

\begin{proof}
(i) By (A2), $\Habs(\alpha(x))=1\iff x\in H$, so the first branch of
\eqref{eq:sigma} fires and returns $\hardDeny(\alpha(x))\in\{\textsc{deny},
\textsc{escalate}\}$, non-granting by \eqref{eq:grant}; by (A3) the branch is
reached on every request. The conclusion needs only that the returned action be
non-granting, not that $\hardDeny(\alpha(x))=\hardDeny(x)$. (ii) By (A2) at
$x\in\Sres$, $\Habs(\alpha(x))=\Hreal(x)=0$---this is the step requiring the
absence of hard-deny false positives---so no other branch applies.
\end{proof}

Lemma~\ref{lem:conform} records that the shield's definition has its intended
effect. It is not a novel security theorem and we do not present it as one; we
state it because it fixes what the rest of the paper is about. Part~(i) says the
shield is exactly as good as $H$, and says nothing about whether $H$ is
\emph{correct}---which is C1. Part~(ii) says the shield cannot help on $\Sres$, so
reducing residual exposure requires naming more of the surface, which C1 shows is
hazardous.

\begin{proposition}[Directional conditional guarantee]
\label{prop:directional}
Drop (A2) entirely. Under (A3) alone,
$\Pr[\Grant(\sigma(X,L(X)))\mid\Hreal(X)=1]\leq\epsm$; when $\epsm>0$, equality
holds iff $\Pr[\Grant(L(X))\mid\Hreal(X)=1,\Habs(\alpha(X))=0]=1$.
\end{proposition}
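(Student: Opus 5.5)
The plan is to condition on $\Hreal(X)=1$ and split the event by the value of $\Habs(\alpha(X))$, using the law of total probability. Write $E=\{\Grant(\sigma(X,L(X)))\}$ and $C=\{\Hreal(X)=1\}$. Then
\begin{equation*}
\Pr[E\mid C]=\Pr[E\mid C,\Habs=1]\,(1-\epsm)+\Pr[E\mid C,\Habs=0]\,\epsm ,
\end{equation*}
where $\Habs$ abbreviates $\Habs(\alpha(X))$. This uses the definition $\epsm=\Pr[\Habs(\alpha(X))=0\mid\Hreal(X)=1]$. If either conditioning event has probability zero, the corresponding term is simply dropped.

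The first term vanishes pointwise and needs no appeal to (A2). By (A3), the first branch of \eqref{eq:sigma} is reached whenever $\Habs(\alpha(x))=1$. It returns $\hardDeny(\alpha(x))\in\{\textsc{deny},\textsc{escalate}\}$, which is non-granting by \eqref{eq:grant}, regardless of $a=L(x)$. This is the same step as part~(i) of Lemma~\ref{lem:conform}, but it is keyed to $\Habs$ rather than to $\Hreal$. The remark in that proof, that only non-granting-ness of the returned action matters, is exactly what lets (A2) be dropped. So $\Pr[E\mid C,\Habs=1]=0$.

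On the second term, $\Habs(\alpha(x))=0$, so $\sigma(x,a)=a$ by the second branch. Hence $E$ coincides with $\{\Grant(L(X))\}$ on this event, and
\begin{equation*}
\Pr[E\mid C]=\epsm\cdot\Pr[\Grant(L(X))\mid\Hreal(X)=1,\Habs(\alpha(X))=0]\le\epsm ,
\end{equation*}
since a conditional probability is at most one. For the equality clause, assume $\epsm>0$, so that the conditioning event has positive probability and the conditional probability is well defined. Dividing by $\epsm$ then shows that equality holds iff that conditional probability equals $1$, which gives both directions at once.

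The only delicate point, and the step I expect to need care, is measure-theoretic bookkeeping rather than any real obstacle. One issue is making sure $L(X)$ is a well-defined random action jointly with $\alpha(X)$; if the learner is adversarial and randomized, its randomness is absorbed into the probability space. The other is handling the null-event cases, namely $\epsm=0$ or $\Pr[\Habs=1\mid C]=0$, so that the decomposition remains valid. I would state these conventions explicitly before the computation.
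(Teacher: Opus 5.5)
Your proposal is correct and matches the paper's proof: partition $\{\Hreal(X)=1\}$ on $\Habs(\alpha(X))$, use (A3) and the codomain of $\hardDeny$ to make the first branch non-granting, and note that on the complement $\sigma$ passes $L(X)$ through, so the rate is $\epsm p\le\epsm$ with equality iff $p=1$ when $\epsm>0$. Your explicit handling of null conditioning events and of a randomized learner is a harmless refinement that the paper leaves implicit.
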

\begin{proof}
Partition $\{\Hreal(X)=1\}$ on whether the abstraction preserves the hard deny.
On $\{\Habs(\alpha(X))=1\}$, of conditional probability $1-\epsm$, $\sigma$
returns a non-granting action, so the grant probability is $0$. On the
complement, of conditional probability $\epsm$, $\sigma$ returns $L(X)$, which
grants with conditional probability $p\leq1$. The unsafe-grant rate is therefore
$\epsm p\leq\epsm$, with equality iff $p=1$ whenever $\epsm>0$.
\end{proof}

Hard-deny false positives ($\epsp$) cannot produce a grant; they force denials
and belong in the availability column. A symmetric quantity $\Pr[\Habs\neq
\Hreal]$ mixes the two and is neither an upper bound on nor a proxy for the
unsafe-grant rate. This is a \emph{reporting prescription} entailed by the
partition, not a discovery, and Section~\ref{sec:eval} reports the three
quantities side by side so a reader can see how far the symmetric one would
mislead.

\section{C1: Deny-Set Soundness}
\label{sec:c1}

\subsection{The mechanism}
Defining $H$ is the one design step no part of the pipeline automates. A human
reads the policy's prose and writes clauses. The failure mode we target is
specific: \emph{policy descriptions state local restrictions, while the grant
relation is a global disjunction.} A sentence of the form ``only members of $X$
may \textsc{verb}'' is almost always scoped to the subsection it appears in,
whereas the induced grant relation is the union over \emph{all} rules, several of
which typically cut across that subsection. A clause read off such a sentence
over-claims, and admits into $H$ requests the policy grants.

The consequence is asymmetric and easy to miss. An unsound clause does not
produce a visible error: it produces a \emph{denial}, issued by the trusted
layer, with the shield's authority, on a request that should have been granted.
It raises the false-non-grant rate, appears only in the availability column, and
is indistinguishable from a conservative learner unless the deny set is checked
directly. (It does not raise $\epsp$, which is measured against $H$ itself.)

\subsection{The check}
A candidate clause $c$ is admitted to $H$ iff
\begin{equation}
\nexists\, x\in\mathcal{X}:\ c(x)\wedge \Grant(P^{*}(x)),
\label{eq:soundcheck}
\end{equation}
verified by exhaustive enumeration over the full grant relation, not by
sampling. On a policy with a finite enumerable request space the check is
decidable and cheap. On a 2014 laptop CPU (Intel Core i7-4610M, Python~3,
NumPy) the full 600{,}000-tuple E-Document relation is parsed and built in under
$0.5$\,s, and all 55 certificates of Section~\ref{sec:c1results} (49 candidates
and six vacuity-pass clauses, across five policies) are computed in about
$0.1$\,s.

Sampling is not an acceptable substitute. The counterexample sets below range
over four orders of magnitude, and a clause with 400 counterexamples in 600{,}000
tuples has a per-sample detection probability below $10^{-3}$: a 1{,}000-request
sample misses it with probability ${\approx}0.51$. The check is a \emph{gate},
not a diagnostic---a clause that fails is rejected, and the analyst either scopes
it more tightly or discards it. Where the request space is not enumerable the
check becomes the satisfiability query $c(x)\wedge\Grant(P^{*}(x))$ and inherits
the completeness of the solver; the symbolic compilers
of~\cite{backes2018zelkova,cutler2024cedar} are the natural integration point,
and we flag that rather than claiming it.

\begin{figure}[t]
\centering
\includegraphics[width=0.7\linewidth]{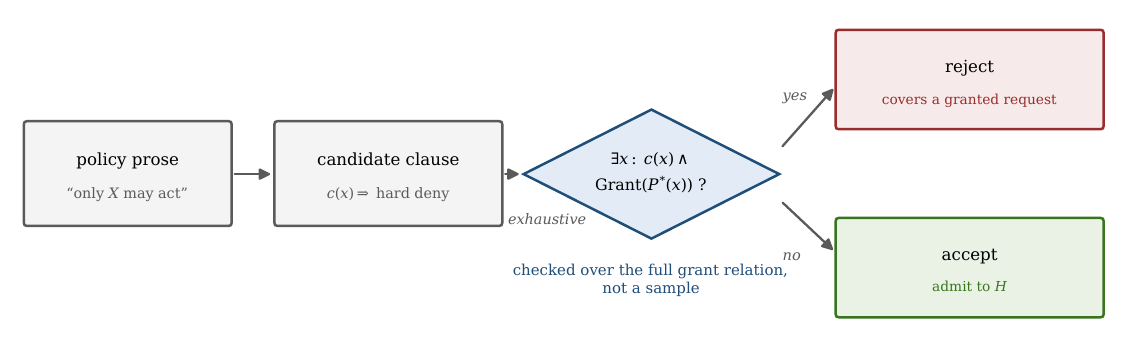}
\caption{Hard-deny soundness as a first-class check. A clause read off the
policy's prose is admitted to $H$ only if it covers \emph{no} request the policy
grants, verified exhaustively against the full grant relation rather than a
sample. Because policy descriptions state \emph{local} restrictions while the
grant relation is a \emph{global} disjunction, prose-derived restriction clauses
over-claim by default (Table~\ref{tab:sound}). An admitted clause must
additionally be checked for non-vacuity (Section~\ref{sec:vacuity}), which the
same enumeration pass supplies at no extra cost.}
\label{fig:hsound}
\end{figure}

\subsection{Audit protocol}
\label{sec:auditprotocol}
So that the result is a property of the failure mode rather than of one analyst
or one policy, we fix the following protocol and apply it identically everywhere.

\begin{enumerate}
\item \textbf{Candidate elicitation.} The elicitor reads only the policy's
\emph{prose} description---the case-study narrative and rule commentary, not the
formal rule set---and writes every sentence that reads as a categorical
prohibition as a candidate clause. Reading the formal rules first would leak the
grant relation into the clause and destroy the measurement. The packet is
produced deterministically by stripping ABAC Lab's fenced formal-rule blocks from
the published README, is leak-checked (no \texttt{subCond}, \texttt{resCond},
$\in$, $\ni$ or $\supseteq$ survives), and carries a SHA-256.
\item \textbf{Provenance tagging.} Each candidate records \emph{how it was
derived from its source sentence} (Section~\ref{sec:provenance}). This step is
new here and is what makes the results below interpretable.
\item \textbf{Freeze.} Candidates are recorded with a per-clause content
fingerprint before certification; for further elicitors the released harness
writes a UTC-timestamped freeze and refuses to overwrite an existing one.
\item \textbf{Certification.} Each frozen candidate is checked by
\eqref{eq:soundcheck} over the full enumerated request space, recording the
counterexample count, the coverage, and for rejected clauses one witness.
\item \textbf{Repair.} Each rejected clause is offered one scoping repair---the
narrowest restriction the prose actually supports---and re-certified. A repair is
recorded \emph{as a repair} and is excluded from the candidate denominator, since
counting it separately would double-count one sentence.
\end{enumerate}

\subsection{Provenance: why a pooled rejection rate means nothing}
\label{sec:provenance}
A candidate clause can be unsound for two entirely unrelated reasons, and a
pooled rate credits one with the other. Consider two sentences from our corpus:

\begin{quote}\footnotesize
(a) \emph{``Only members of the sales department can send, view, or search
invoices.''} (E-Document, Rule 7)\\
(b) \emph{``The author of an item can read it.''} (Healthcare, Rule 5)
\end{quote}

Reading (a) as ``deny \textsc{send}/\textsc{view}/\textsc{search} on invoices
when the subject is not in Sales'' is a faithful transcription: the sentence
asserts exclusivity, and if the resulting clause covers a granted request, the
reason is that the sentence's ``only'' is scoped to its subsection while the
grant relation is not. That is the mechanism of Section~\ref{sec:c1}. Reading (b)
as ``deny \textsc{read} when the subject is not the author'' is not a
transcription at all: the sentence grants a permission and says nothing about
exclusivity. Such a clause is invalid before any policy is consulted, and its
rejection measures the elicitor, not the artefact.

We therefore tag every candidate with its derivation and report the strata apart:

\begin{itemize}
\item \textbf{explicit-restriction} --- the sentence itself states a restriction
(``only $X$ may \textsc{verb}'', ``$X$ may only \textsc{verb}'', ``\dots except
$Y$''). \emph{This is the mechanism stratum.}
\item \textbf{inferred-exclusivity} --- ``$X$ can \textsc{verb}'' read as ``only
$X$ can \textsc{verb}''.
\item \textbf{inferred-necessity} --- a sufficient condition for a grant read as
a necessary one.
\item \textbf{silence} --- the clause rests on the \emph{absence} of a stated
permission; an argument from silence.
\item \textbf{repair} --- a re-scoped variant of another candidate; excluded from
the denominator.
\end{itemize}

The tags are assigned from the source sentence alone. Because the elicitor
also saw the verdicts of this audit, we check that the stratum assignment that
matters---mechanism or not---does not depend on them: a fixed lexical rule that
marks a sentence explicit iff it contains \emph{only}, \emph{except} or
\emph{unless} reproduces the hand assignment on all 46 candidates that have a
source sentence (23/23 explicit, 23/23 non-explicit; the two \emph{silence}
candidates have no source sentence, and the repair is excluded). The mechanism
stratum is therefore a deterministic function of the policy text. We regard this
stratification, rather than any single rate, as C1's methodological
contribution: a rejection rate for a prose-derived deny set is uninterpretable
without it.

\subsection{Results}
\label{sec:c1results}
Table~\ref{tab:sound} reports the audit. All checks are exhaustive over the
enumerated relation, so the certificates are exact and admit no false negatives
over that relation. The parser is a faithful reimplementation of ABAC Lab's own
evaluator, validated at run time by reproducing each policy's published
statistics to the unit---E-Document's 600{,}000-tuple space with
$\#\mathrm{perm}=32{,}961$, and Healthcare's 1{,}008 requests with 43 grants.

\begin{table}[t]
\centering
\caption{Hard-deny soundness audit across all five public ABAC Lab policies,
stratified by clause provenance.}
\vspace{-2pt}
{\footnotesize\raggedright ``expl.'' is the explicit-restriction (mechanism)
stratum of Section~\ref{sec:provenance}; ``other'' pools the three
elicitation-error strata. The three Xu--Stoller policies have an \emph{empty}
mechanism stratum: their prose contains no restriction statement, so the
mechanism cannot be exercised on them and every rejection there is an elicitation
error. One repair (E-Document) is excluded from the denominator.\par}
\vspace{2pt}
\label{tab:sound}
\footnotesize
\begin{tabular}{@{}l@{\hskip 4pt}r@{\hskip 5pt}c@{\hskip 5pt}c@{\hskip 5pt}l@{}}
\toprule
policy & enum.\ tuples & expl. & other & expl.\ c/ex \\
\midrule
E-Document~\cite{decat2014edocument} & 600{,}000 & 12/14 & 1/2 & 495--6{,}489 \\
Workforce~\cite{bui2025abaclab}      & 794{,}250 &  4/9  & 0/1 & 20--7{,}665 \\
University~\cite{xu2015mining}       &   6{,}732 &  ---  & 3/8 & --- \\
Project Mgmt~\cite{xu2015mining}     &   3{,}040 &  ---  & 2/6 & --- \\
Healthcare~\cite{xu2015mining}       &   1{,}008 &  ---  & 7/8 & --- \\
\midrule
\textbf{total} & & \textbf{16/23} & \textbf{13/25} & 20--7{,}665 \\
\bottomrule
\end{tabular}
\end{table}

\begin{table}[t]
\centering
\caption{Rejection rate by provenance, with exact Clopper--Pearson intervals
(reported rather than a normal approximation because several strata have
single-digit denominators). The strata are not pooled: only the first
isolates the scope mismatch, and the pooled rate of $0.60$ is reported solely to
show what a study omitting this stratification would have published.}
\label{tab:prov}
\footnotesize
\begin{tabular}{@{}lrrl@{}}
\toprule
provenance & $n$ & unsound & rate [95\% CI] \\
\midrule
\textbf{explicit-restriction} (mechanism) & 23 & 16 & \textbf{0.696} $[0.471,0.868]$ \\
inferred-exclusivity   & 12 &  7 & 0.583 $[0.277,0.848]$ \\
inferred-necessity     & 11 &  5 & 0.455 $[0.168,0.766]$ \\
silence                &  2 &  1 & 0.500 $[0.013,0.987]$ \\
\midrule
elicitation-error strata (pooled) & 25 & 13 & 0.520 $[0.313,0.722]$ \\
\emph{all strata pooled (uninterpretable)} & 48 & 29 & \emph{0.604} $[0.453,0.742]$ \\
\bottomrule
\end{tabular}
\end{table}

Sixteen of twenty-three faithfully transcribed restriction clauses cover
requests the policy grants (Table~\ref{tab:prov}), a rate of $0.70$ with an
exact interval of $[0.47,0.87]$, and counterexample counts from 20 to 7{,}665
over relations of up to 794{,}250 tuples. Section~\ref{sec:independent} tests
whether this rate is a property of the elicitor.

The mechanism stratum is empty on three of five policies, and this is itself
informative. The Xu--Stoller instances~\cite{xu2015mining} document their
policies as lists of granting rules and never write a restriction; the two
multi-tenant case-study-derived instances~\cite{decat2014edocument,bui2025abaclab}
write restrictions constantly, because a tenant subsection is exactly the context
in which an author reaches for ``only''. On this corpus the failure mode
therefore tracks how multi-tenant policies are documented rather than ABAC as
such, and suggests where a practitioner should look first: \emph{cross-sectional universals in prose
that the rule set does not enforce universally}.

\subsection{Independent, blind elicitation}
\label{sec:independent}
The rates above come from one elicitor, an author who also saw the verdicts. To
test whether they are a property of that elicitor, we ran three further
elicitors that saw nothing but what the protocol allows. Each, L1--L3, was a
fresh language-model agent---Claude Opus~5.5, Sonnet~5.5 and Haiku~5.5, one
model each---with read access to exactly five files: the elicitation prompts
that the released harness generates for the five policies (prose packet and
attribute schema, with a syntax example taken from a fictional library policy).
No agent had access to the formal rules, the certificates, the code, or any
other elicitor's clauses, and each confirmed reading only those files. The
candidates were parsed by the harness, frozen with per-clause fingerprints and
a timestamp, and only then certified. Every one of the 113 quoted source
sentences was found in the packet the elicitor was shown, and each candidate was
assigned to the mechanism stratum by the lexical rule of
Section~\ref{sec:provenance}, so no hand tagging enters this comparison.

\begin{table}[t]
\centering
\caption{Author and independent elicitors compared. Mechanism stratum
(explicit-restriction) with exact Clopper--Pearson intervals; ``vac.'' counts
clauses that cover no request; the last two columns give the fraction of all
granted requests that the elicitor's unchecked candidate set would deny if
shipped as $H$.}
\label{tab:indep}
\footnotesize
\setlength{\tabcolsep}{2.6pt}
\begin{tabular}{@{}llrlrcc@{}}
\toprule
 & & \multicolumn{2}{c}{mechanism stratum} & & \multicolumn{2}{c}{grants denied}\\
\cmidrule(lr){3-4}\cmidrule(lr){6-7}
elicitor & model & unsound & rate [95\% CI] & vac. & E-Doc. & Workf.\\
\midrule
R1 & author        & 16/23 & 0.696 $[0.471,0.868]$ & 0 & 0.516 & 0.548 \\
L1 & Opus 5.5      & 18/36 & 0.500 $[0.329,0.671]$ & 4 & 0.159 & 0.546 \\
L2 & Sonnet 5.5    & 22/38 & 0.579 $[0.408,0.737]$ & 5 & 0.265 & 0.593 \\
L3 & Haiku 5.5     & 23/35 & 0.657 $[0.478,0.809]$ & 3 & 0.702 & 0.780 \\
\midrule
\multicolumn{2}{@{}l}{L1--L3 pooled} & 63/109 & 0.578 $[0.480,0.672]$ & 12 & & \\
\bottomrule
\end{tabular}
\end{table}

\emph{The rate is not a property of the author.} All three independent
elicitors write unsound restriction clauses at rates comparable to the author's
(Table~\ref{tab:indep}): $0.50$, $0.58$ and $0.66$ against $0.70$, with
overlapping intervals, and $63$ of $109$ pooled. The largest model, which
deliberately scoped many of its clauses to the tenant named in the section
heading, still failed on half. The hypothesis left open by a single elicitation,
that independent elicitors write sound restriction clauses, is rejected on this
corpus.

\emph{The same sentences fail.} The README contains 31 rule sentences that the
lexical rule marks as explicit restrictions. Of these, 27 were formalised by at
least two of the four elicitors; on 20 every elicitor's clause received the same
verdict, and on 15 every elicitor's clause was unsound. Nine E-Document
restrictions (Rules 1, 7--11, 15, 17 and 19) are unsound for all four
elicitors. Agreement on which sentences to formalise is only moderate (Fleiss'
$\kappa=0.42$ over the 31 sentences), so elicitors differ more in what they
notice than in whether what they write survives certification.

\subsection{Where the counterexamples come from}
\label{sec:attribution}
The mechanism of Section~\ref{sec:c1} makes a testable prediction: an unsound
restriction clause is contradicted not by its own rule but by other rules, from
other parts of the policy. Every counterexample is granted by at least one rule,
and the README places each rule in a section (a tenant, the \texttt{\#\#\#}
heading) and a subsection (the \texttt{\#\#\#\#} heading beneath it), so the
prediction can be checked counterexample by counterexample. Over the 77 unsound
mechanism-stratum clauses of the four elicitors whose source rule can be located
(of 79; 137{,}454 counterexamples),
only $3.4\%$ of counterexamples are granted by the clause's own source rule
($0\%$ for R1), $86\%$ are granted only by rules in \emph{other subsections}, and
$62\%$ only by rules of \emph{other tenants}.

The unsoundness is therefore almost entirely the disjunction with other rules,
as the mechanism predicts, and most of it crosses a subsection rather than a
tenant boundary. This has a practical consequence. Scoping a clause to its
tenant---re-inserting the most visible section boundary, by adding a guard such
as \texttt{resource.tenant = largeBank}---repairs only 11 of the 79 unsound
clauses, because a sentence's ``only'' is scoped to a subsection for which the
object model has no attribute.

Workforce shows both cases. It has two sentences about the \emph{same resource
type}, in different tenant sections:
\begin{quote}\footnotesize
Rule 15 (PowerProtection): \emph{Only members of the Provisioning group can
receive stock refill notifications.}\\
Rule 19 (TelCo): \emph{Only members of Technician Support can receive stock
refill notifications.}
\end{quote}
Read globally, as R1 read them, they are structurally identical but receive
opposite verdicts. ``Only Technician Support'' covers 6{,}840 tuples and admits
no granted request. ``Only Provisioning'' covers 6{,}400 tuples and has 20
counterexamples, witnessed by \texttt{(slmgr001, stockrefill001, receive)}---a
sales manager in the TelCo tenant and the \texttt{techSupport} group, receiving
a refill request granted by Rule~19. All three independent elicitors scoped both
sentences to their tenant, and all three of their Rule-15 clauses are sound.
This is the case tenant scoping fixes. E-Document shows the case it does not:
Rule~7 (``Only members of the sales department can send, view, or search
invoices'') is contradicted by Rule~12, which lets auditors in the \emph{same}
Large Bank section read invoices, and every elicitor's Rule-7 clause is unsound,
including the two that were scoped to that tenant. In both cases the certificate is
computed in well under a second, and it is the only step in the pipeline that
distinguishes the sound reading from the unsound one.

\subsection{A second failure mode: vacuity}
\label{sec:vacuity}
A clause that covers no request passes the soundness check trivially: it
certainly covers no granted request. An analyst would ship it believing $H$
constrains something it does not touch. The failure is silent in exactly the way
an unsound clause is silent, in the opposite direction: an unsound clause denies
too much, a vacuous one denies nothing while appearing to.

It occurs in practice. Twelve of the 113 independent clauses are vacuous
(Table~\ref{tab:indep}). In each case the conjunction of guards selects a
population the object model does not instantiate. E-Document's Rule~25 is the
clearest: it restricts ``Registered Private Receivers'', but all 16 private
receivers in the object model are unregistered, so every faithful formalisation
of the sentence---all three independent elicitors wrote one---covers nothing.
A second, deliberately constructed elicitor shows the vocabulary route to the
same failure: formalising each sentence with the informal names the prose uses
(``stock refill notifications'' for the resource type
\texttt{stockRefillRequest}, ``create'' where the object model has only
\texttt{createOneTimeWorkOrder} and \texttt{createRecurrentWorkOrder}) produced
six clauses, all vacuous and all certified sound. We therefore pair the
soundness certificate with a \emph{non-vacuity} check---the clause must cover at
least one request---and report coverage for every admitted clause. Both come
from the same enumeration pass at no extra cost.

\subsection{How much of a verdict is the scoping choice?}
\label{sec:scopesens}
Within the mechanism stratum the remaining degree of freedom is \emph{scoping}:
given the same restriction sentence, elicitors may draw the clause's boundary
more or less tightly. For every clause R1 admitted as sound we enumerate its
one-conjunct \emph{broadenings}---the clause obtained by dropping a single guard,
i.e.\ what the same sentence yields if read slightly less narrowly---and certify
each exhaustively. In the mechanism stratum, 14 of 21 such variants are unsound
($0.67$, $[0.43,0.85]$), with counterexample counts from 16 to 16{,}202; across
all strata, including variants of the repaired clause, it is 40 of 59. Two
thirds of admitted restriction clauses therefore sit directly on the soundness
boundary: they are sound because of a guard the elicitor happened to include,
and one step outwards crosses into covering granted requests. This is the
single-elicitor counterpart of the between-elicitor variation in
Section~\ref{sec:independent}.

\subsection{The cost of skipping the check}
\label{sec:cost}
An unsound clause produces a denial with the trusted layer's authority, and
nothing in the enforcement stack reports it. The last two columns of
Table~\ref{tab:indep} measure what that costs. Shipped unchecked as $H$, the
candidate sets of the four elicitors would deny between $16\%$ and $70\%$ of all
requests E-Document grants, and between $55\%$ and $78\%$ of those Workforce
grants. The certified subsets of the same candidate sets deny none, by
construction, while still naming up to $3\%$ of E-Document's and between $15\%$
and $50\%$ of Workforce's non-granted requests. Section~\ref{sec:eval-real}
shows the same contrast inside a composed system with a trained learner.

\subsection{What C1 establishes}
\begin{enumerate}
\item \textbf{The mechanism is structural, and measurable.} Where prose states a
restriction scoped to a section while the grant relation is a disjunction over
all rules, a clause read from that sentence over-claims. Counterexample
attribution confirms it: $97\%$ of counterexamples are granted by other rules,
$86\%$ by rules in other subsections.
\item \textbf{The rate is not one elicitor's.} In the stratum where the mechanism
is isolated, four elicitors wrote unsound clauses at rates of $0.50$--$0.70$,
and 15 sentences failed for every elicitor that formalised them. These are
counts for two public artefacts and four elicitors, not a prevalence estimate.
\item \textbf{A pooled rate is not a measurement.} Had we reported R1's pooled
$29/48$ we would have credited the mechanism with 13 rejections caused by
invalid inferences. Stratification is reproducible by a lexical rule, and we
supply it.
\item \textbf{Soundness alone is insufficient}: it must be paired with a
coverage figure, since vacuity occurs in independent elicitations too
(Section~\ref{sec:vacuity}).
\item \textbf{Skipping the check is costly and silent}: an unchecked prose-derived
deny set can withhold most of what a policy grants (Section~\ref{sec:cost}).
\end{enumerate}
We recommend that any paper reporting a shielded or deny-first architecture over
a named set report the soundness certificate for that set, together with a
coverage figure, the size of the enumerated relation, and the elicitation
protocol and provenance stratification that produced the candidates.

\section{C2: Constructing a Region Map}
\label{sec:c2}

The region map---which requests are covered by $R$ and which are handed to the
learner---is produced by a mining pipeline with hyperparameters. This section
makes two claims. The first is that the object the hybrid architecture needs is
not the object the mining literature produces, so there is no established way to
build it. The second is that two hyperparameters of the obvious construction have
failure modes that are arithmetic and combinatorial rather than data-limited, and
therefore cannot be fixed by collecting more logs.

\subsection{A hybrid region map is not a mined ABAC policy}
\label{sec:reference}
A mined ABAC policy is \emph{total} (Section~\ref{sec:rw}). In the commonest
formalism---default deny plus permit rules---the candidate pool is also
one-class, so there is no minority class to crowd out; a request no rule matches
is denied, because that is what the policy language means. Miners that also
learn negative rules~\cite{iyer2018mining} are two-class but remain total,
deciding unmatched or conflicting requests by a default and a
conflict-resolution order. Under either semantics the informative region is
empty, $m=0$, Proposition~\ref{prop:decomp} gives an effect of exactly zero, and
there is no routing decision to make.

The hybrid architecture therefore presupposes something else: a \emph{two-class,
abstaining} rule set, one that permits on some requests, denies on others, and
declines to answer on the rest. We are not aware of an ABAC miner that emits
such an object. To make this concrete we implement a reference arm---a
Xu--Stoller-style greedy cover over the permit tuples, permit-effect candidates
only, selection by marginal permit coverage rather than by any ranking under a
cap---and evaluate the \emph{same} candidate pool under three readings
(Table~\ref{tab:reference}).

\begin{table}[t]
\centering
\caption{The published-miner reference arm, read three ways.}
\vspace{-2pt}
{\footnotesize\raggedright Under default-deny the reference miner produces a
usable policy and there is no routing decision ($m=0$). Read as abstaining, the
same rules cover 2--5\% of the space and are all-\emph{permit}. Cap-and-rank over
a mixed-effect pool is all-\emph{deny}. Three natural constructions, three
different failures. Workforce is mined at $L=2$ over a 200{,}000-row sample
because its 25 attributes make length-3 enumeration intractable here; at that
depth the greedy cover finds one rule, and its row is a limitation of the
hypothesis language, not of the reference algorithm.\par}
\vspace{2pt}
\label{tab:reference}
\footnotesize
\begin{tabular}{@{}lrrrrr@{}}
\toprule
 & & \multicolumn{2}{c}{default-deny (total)} & \multicolumn{2}{c}{open-world (abstain)}\\
\cmidrule(lr){3-4}\cmidrule(lr){5-6}
policy & rules & permit-F1 & $m$ & covered & deny-F1 \\
\midrule
E-Document         & 35 & 0.934 & 0.000 & 0.048 & 0.000 \\
University         & 10 & 0.930 & 0.000 & 0.022 & 0.000 \\
Project Mgmt       &  8 & 0.865 & 0.000 & 0.025 & 0.000 \\
Workforce$^{\dagger}$ &  1 & 0.006 & 0.000 & 0.000 & 0.000 \\
\bottomrule
\multicolumn{6}{@{}l@{}}{\footnotesize $^{\dagger}$ $L=2$ over a 200{,}000-row sample; see caption.}
\end{tabular}
\end{table}

The reference miner produces a good policy on three of four policies
(permit-F1 $0.87$--$0.93$) and no region map at all. Read as abstaining, it
covers under 5\% of the request space and its covered region is one-class in the
permit direction. Cap-and-rank over a mixed-effect pool, examined next, is
one-class in the deny direction. The contribution of C2 is therefore not that
published miners are broken---they are not, and the pathology below is not a
criticism of them---but that the hybrid architecture requires a construct with
no established construction method, and the three natural candidates each fail
in a different direction.

\subsection{\texorpdfstring{The noise ceiling: minconf must be below $1-\varepsilon$}{The noise ceiling: minconf must be below 1-epsilon}}
\label{sec:noiseceiling}

\begin{proposition}[Confidence ceiling under label-flip noise]
\label{prop:ceiling}
Let the training labels be an independent symmetric label-flip realisation of
$P^{*}$ at rate $\varepsilon$, and let $g$ be a candidate rule whose pattern is
perfectly correct in $P^{*}$. Let $n=|\mathrm{supp}(g)|$ be the number of log
records matching $g$'s condition (a function of attributes only, hence
unaffected by label noise) and $\mathrm{conf}(g)$ its empirical confidence on
the noisy log. Then $n\cdot\mathrm{conf}(g)\sim
\mathrm{Binomial}(n,1-\varepsilon)$, so
$\mathbb{E}[\mathrm{conf}(g)]=1-\varepsilon$ exactly, and for any threshold
$\theta>1-\varepsilon$,
\begin{equation}
\Pr[\mathrm{conf}(g)\geq\theta]\ \leq\
\exp\!\left(-2n\bigl(\theta-(1-\varepsilon)\bigr)^{2}\right).
\label{eq:hoeffding}
\end{equation}
\end{proposition}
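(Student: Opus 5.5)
The plan is to reduce the statement to a sum of independent Bernoulli variables and then apply Hoeffding's inequality directly. Fix the candidate rule $g$ and its support, the set of $n$ log records whose attributes satisfy $g$'s condition.

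\textbf{Step 1: the support is fixed.} Because the support is a function of attributes only and the noise acts only on labels, I will condition on the attribute values throughout. This makes $n$ a deterministic quantity, and every probability below is taken over the noise alone.

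\textbf{Step 2: the count is binomial.} For each record $i$ in the support, let $Z_i=\mathbf{1}[\text{noisy label}_i=\text{effect}(g)]$. Since $g$ is perfectly correct in $P^{*}$, the clean label of every matching record equals $g$'s effect. The noisy label therefore agrees with the effect exactly when no flip occurs, so $Z_i\sim\mathrm{Bernoulli}(1-\varepsilon)$. Independence of the flips gives independence of the $Z_i$. The empirical confidence is the fraction of matching records whose label equals the rule's effect, $\mathrm{conf}(g)=\frac1n\sum_i Z_i$. Hence $n\cdot\mathrm{conf}(g)=\sum_i Z_i\sim\mathrm{Binomial}(n,1-\varepsilon)$, and by linearity $\mathbb{E}[\mathrm{conf}(g)]=1-\varepsilon$ exactly.

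\textbf{Step 3: Hoeffding's inequality.} The $Z_i$ are independent and take values in $[0,1]$. Set $t=\theta-(1-\varepsilon)>0$. The one-sided Hoeffding bound gives
\[
\Pr\Bigl[\tfrac1n\textstyle\sum_i Z_i-(1-\varepsilon)\geq t\Bigr]\leq e^{-2nt^{2}},
\]
which is exactly \eqref{eq:hoeffding}. Note that the event $\{\mathrm{conf}(g)\geq\theta\}$ coincides with $\{\frac1n\sum_i Z_i-(1-\varepsilon)\geq t\}$, so no rounding of $\theta n$ to an integer is needed.

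\textbf{Main obstacle.} The delicate step is not the inequality but the modelling in Step 2. It must be stated precisely which records count as matching. The support must be defined on the log before noise and not on any noise-dependent filtering, and the flips must be independent of the attributes. If the log contained duplicate records or noise correlated with attributes, the binomial claim would fail; this is why the statement assumes an independent symmetric flip.

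For a binary label space, symmetric flipping means each label is replaced by the other class with probability $\varepsilon$. In the multi-action case, "flip" should be read as "the label is changed from its true value with probability $\varepsilon$". Only the probability of agreement with the true label enters the argument, so the proof covers both readings.
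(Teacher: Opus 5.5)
Your proposal is correct and matches the paper's proof: a matching record's noisy label agrees with the rule's effect exactly when it is not flipped, which gives independent $\mathrm{Bernoulli}(1-\varepsilon)$ indicators and hence the binomial law and the mean, and the tail bound is the one-sided Hoeffding inequality. Conditioning on the attributes to make $n$ deterministic is a precision the paper leaves implicit, but drop the side remark that duplicate records would break the binomial: under the stated model each duplicate is flipped independently, and only a flip shared across copies would matter, which the independence hypothesis already excludes.
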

\begin{proof}
Each matching record's observed label agrees with the rule's effect iff it is not
flipped, independently with probability $1-\varepsilon$, giving the binomial;
\eqref{eq:hoeffding} is Hoeffding's inequality on the upper tail.
\end{proof}

Two consequences follow, and the second is the operationally important one.

\textbf{(i) Exclusion is probabilistic and sharpens exponentially in support.}
The threshold is not a strict finite-sample bar---by chance a rule can have fewer
than $\varepsilon n$ flipped labels and clear it---but the escape probability
vanishes as evidence accumulates. Collecting more logs makes recovery \emph{less}
likely, which inverts the usual intuition about data volume and is why the
symptom is easily misdiagnosed as data sparsity.

\textbf{(ii) Above the ceiling, the threshold is a low-support filter admitting
flukes.} Because \eqref{eq:hoeffding} decreases in $n$, the correct rules that
\emph{do} survive a mis-set minconf are exactly the low-support ones, surviving
by chance rather than by merit. At $\varepsilon=0.15$ and $\theta=0.90$ the exact
binomial tail is $0.191$ at support 36, $0.025$ at 200 and $2.0\times10^{-6}$ at
1{,}000. On the E-Document sample mined in Section~\ref{sec:ranking}, the median
permit-effect candidate has support 13 (tail $0.40$) and the median deny-effect
candidate support 412 (tail $1.7\times10^{-3}$), and \eqref{eq:hoeffding} bounds
the tail below $10^{-433}$ at the largest deny-effect support, 199{,}735. The
noise ceiling and the ranking pathology of Section~\ref{sec:ranking} therefore
act on the same rule pool, in opposite directions: the confidence stage retains
low-support flukes, and a support-ranked cap discards exactly those.

\textbf{Prescription.} Any noisy-log mining protocol should treat
$\mathrm{minconf}<1-\hat\varepsilon$ as a hard constraint, report
$\hat\varepsilon$ alongside minconf, and report the support distribution of the
retained rules. We make no claim about how often these settings co-occur in
practice, having no survey evidence; the statement is conditional on a pipeline
configured with $\mathrm{minconf}=0.9$ against a log with $\varepsilon\geq0.10$.
All runs reported here use $\mathrm{minconf}=0.60<1-\varepsilon$.

\subsection{Where the ceiling does not transfer}
\label{sec:noiserobust}
Proposition~\ref{prop:ceiling} is exact under \emph{independent, symmetric}
label-flip noise. Three common departures are not covered: class-dependent noise,
attribute- or subgroup-dependent noise, and dependent noise where flips correlate
within a user, resource or session. We tested whether the two consequences
survive them. A reference set of 400 latently-correct rules (clean confidence
exactly $1.0$, median support 142) was mined from the E-Document relation and
their confidences recomputed under four corruption processes sharing the nominal
marginal rate $\varepsilon=0.15$, over 10 seeds each (Table~\ref{tab:noise}).

\begin{table}[t]
\centering
\caption{Structured-noise robustness of Proposition~\ref{prop:ceiling}'s two
consequences.}
\vspace{-2pt}
{\footnotesize\raggedright E-Document; 400 latently-correct reference rules;
nominal $\varepsilon=0.15$; 10 seeds. ``Retain@.90'' is the fraction of correct
rules clearing $\mathrm{minconf}=0.90$; ``supp.\ shift'' is median survivor
support over median reference support, where a value below 1 is the low-support
bias of consequence~(ii). The proposition itself is unaffected: it is proved
under the symmetric IID model, whose row matches its prediction ($0.85$) to
within $0.001$.\par}
\vspace{2pt}
\label{tab:noise}
\footnotesize
\begin{tabular}{@{}lcccc@{}}
\toprule
noise process & mean conf. & sd & retain@.90 & supp.\ shift \\
\midrule
symmetric IID$^{*}$ & 0.8491 & 0.0028 & 0.071 & 0.28 \\
class-dependent$^{\dagger}$ & 0.8494 & 0.0028 & 0.073 & 0.29 \\
attribute-dependent & 0.8561 & 0.0020 & 0.158 & 0.60 \\
block-correlated & 0.8541 & 0.0180 & 0.299 & 0.75 \\
\bottomrule
\multicolumn{5}{@{}l@{}}{\footnotesize $^{*}$ the model Prop.~\ref{prop:ceiling} assumes. $^{\dagger}$ nearly indistinguishable}\\
\multicolumn{5}{@{}l@{}}{\footnotesize from symmetric here: the reference set is almost entirely deny-}\\
\multicolumn{5}{@{}l@{}}{\footnotesize effect at a $5.5\%$ permit base rate.}
\end{tabular}
\end{table}

The prediction is confirmed under its own model: mean observed confidence is
$0.8491$ against a predicted $0.85$. The exact ceiling does not transfer, and
its estimate becomes unreliable. Under attribute-dependent noise the mean rises
to $0.8561$ and the retained fraction from $0.071$ to $0.158$; under
block-correlated noise the seed-to-seed standard deviation is more than six
times larger ($0.0180$ against $0.0028$). Sweeping block granularity (per
subject, and subjects grouped by 8 and by 64) and the within-block flip
probability ($0.15$ to $1$) shows how far this goes. The seed-to-seed standard
deviation grows from $0.0019$ to as much as $0.1255$; at the coarsest
granularity with every flip confined to its block, mean confidence reaches
$0.9067$ and $72.7\%$ of correct rules clear $\mathrm{minconf}=0.90$, so
$1-\varepsilon$ is no longer a ceiling at all. Consequence~(ii) weakens with
correlation: the support-shift ratio rises from $0.29$ at the IID end to between
$0.72$ and $0.92$ once the within-block flip probability reaches $0.5$, at
every granularity, though not monotonically within one. Consequence~(ii) is
therefore an IID result and we state it as one: the prescription to report the
support distribution of retained rules stands, but the predicted
\emph{direction} of the bias should not be assumed when log noise is correlated
within users or sessions.

One test was less informative than intended. Our class-dependent process flips
deny$\to$permit at $\varepsilon$ and permit$\to$deny at $\varepsilon/4$; because
E-Document's permit rate is $5.5\%$ the reference set is overwhelmingly
deny-effect, for which this process is nearly indistinguishable from the
symmetric one, and the columns are indeed nearly identical. This condition does
not test asymmetry for this reference set, and a policy with a more balanced base
rate would be needed.

\subsection{Ranking, not the cap size, is the barrier}
\label{sec:ranking}
Below the noise ceiling a rule set mined by cap-and-rank can \emph{still} be
one-class, at every cap size. The natural inference is that grant patterns are absent from
the policy. They are not: they are excluded by a \emph{support-versus-cap
interaction}. Minority-class groups have low support while majority-class groups
reach very high support, so a support-ranked cap fills every slot with
majority-class rules before any minority group is reached. The effect is a
property of the \emph{ordering}, not of the budget: raising the cap does not help,
because the rules that would be admitted are never reached.

To separate the selection rule from any one implementation and from any one
policy's support distribution, we re-implemented the miner independently
(vectorised conjunction enumeration over the relation---over a uniform
200{,}000-row sample for the two large policies---with the
cross-attribute literals the ABAC rule language actually uses---$u.a=r.b$ and
$u.a\ni r.b$---since a hypothesis language of single-attribute equalities cannot
express relational rules and would make the measurement an artefact of the
language) and crossed four public policies with five candidate-selection rules
and three cap sizes, holding $\mathrm{min\_support}$ and $\mathrm{min\_conf}$
fixed within each policy.

The minority-support gap is present on all four policies. Median
majority-class candidate support exceeds median minority-class support by
$3.3\times$ (University, 40 vs 12) to $72.7\times$ (Workforce, 800 vs 11), and
minority candidates comprise between $0.05\%$ (Workforce, 9 of 17{,}140) and
$4.0\%$ (Project Management, 140 of 3{,}503) of all candidates. The precondition
the mechanism names holds throughout.

\begin{table}[t]
\centering
\caption{Selection-rule degeneracy over $4$ policies $\times$ $3$ cap sizes
($100/300/1000$), with an independently reimplemented miner. ``Degenerate'' means
the retained rule set is one-class on the covered region; ``minority'' is the
mean number of minority-class rules retained per cell and ``inf.\ mass'' the mean
informative-region mass. Per-policy configurations: E-Document at conjunction
cap $L{=}3$ over a uniform 200{,}000-row sample (min.\ support 5); University
and Project Management at $L{=}3$ over the full relation (min.\ support 3);
Workforce at $L{=}2$ over a 200{,}000-row sample (min.\ support 5); minimum
confidence $0.60$ throughout. Region and F1 figures are estimated on a
50{,}000-row subsample where the relation is larger.}
\label{tab:e3}
\footnotesize
\begin{tabular}{@{}lrrrr@{}}
\toprule
selection rule & degen. & rate & minority & inf.\ mass \\
\midrule
support                       & 12/12 & 1.00 &   0.0 & 0.00 \\
confidence (support tb) & 12/12 & 1.00 &   0.0 & 0.16 \\
confidence (low-support tb)   &  5/12 & 0.42 &   8.9 & 0.81 \\
lift                          &  3/12 & 0.25 & 152.5 & 0.31 \\
\textbf{class-balanced} & \textbf{0/12} & \textbf{0.00} & 90.6 & 0.16 \\
\bottomrule
\end{tabular}
\end{table}

Support-ranked selection degenerated in every one of its twelve cells, and
confidence-ranked selection also degenerated in every cell
(Table~\ref{tab:e3}), i.e.\ $24/24$ for the two commonest ranking keys. The
reason for the second is visible once stated: on these policies many
majority-class groups reach confidence $1.0$, so the confidence key ties and the
\emph{tiebreak} decides---and a support tiebreak reproduces support ranking
exactly. Breaking ties toward low support instead reduced degeneracy only to
$5/12$, and did so by collapsing the covered region almost entirely (mean
informative mass $0.81$), which is not a usable region map. Lift-ranking
degenerated in $3/12$ cells and did so in the opposite direction, selecting
all-permit rule sets. Only class-aware selection was non-degenerate
everywhere ($0/12$), recovering on average 91 minority-class rules per cell
while keeping informative mass at $0.16$. We implement it as confidence ordering
within each effect class followed by a $1{:}1$ interleave, which is the simplest
rule with the required property and is not tuned.

\emph{Implication for practice.} It is not sufficient to rank by confidence
rather than support; what is required is a selection rule that cannot let
majority-class support decide the cap, which confidence-first with a support
tiebreak still does. Practitioners should check the \emph{class composition} of
the retained rule set directly rather than trusting any ranking key.

\emph{Scope limits.} Workforce was mined at $L=2$ over a 200{,}000-row sample,
so its cells test the selection rule at a shallower hypothesis language than the
others. All four policies have skewed base rates ($2$--$5.5\%$ permit); whether
class-aware selection is still necessary, or still sufficient, on a policy with a
balanced base rate is untested. And the five rules are a search, not an
enumeration of the space of selection strategies.

\subsection{Consequences for reporting}
Covered-region quality $q$ and informative-region mass $m$ are the two quantities
every rules-versus-learner comparison implicitly conditions on, and both are
pipeline outputs. We adopt three reporting rules and recommend them generally.
\begin{enumerate}
\item \textbf{Report $q$ and $m$ only with the operator tuple} (conjunction cap
$L$, cap size, ranking criterion, minconf, and the region-map semantics---%
default-deny or abstaining) that produced them.
\item \textbf{Report the per-class decomposition of $q$, and a one-class flag.} A
bare macro-F1 carries no cross-dataset information at skewed base rates:
one-class prediction at a ${\sim}94$--$95\%$ base rate scores
$\approx(0.97+0)/2\approx0.486$ \emph{from either direction}. E-Document's
support-ranked rule set is all-deny and scores $q=0.486$; an all-permit rule set
on a log with the mirror-image base rate scores the same. Such an equality is an
artefact of the base rate.
\item \textbf{Never compare a bare $q$ across datasets.} Section~\ref{sec:c3}
gives the further reason: even a per-class-decomposed $q$ is not the operative
quantity.
\end{enumerate}

\section{C3: The Routing Decision}
\label{sec:c3}

\subsection{An exact decomposition}
Let $C=\{x:\mathrm{region}(x)=\text{covered}\}$ and $I=\mathcal{X}\setminus C$,
with $m=\Pr[x\in I]$. Let $\ell$ be any additive per-request loss---the
false-non-grant (FNG) indicator, the false-permit (FP) indicator, one minus
per-request accuracy---so aggregate risk decomposes over the partition. The
\emph{hybrid} arm routes $C$ to $R$ and $I$ to $L$; the \emph{learner} arm uses
$L$ everywhere; both are shielded identically.

\begin{proposition}[Exact hybrid decomposition]
\label{prop:decomp}
\begin{equation}
\mathrm{Risk}(\text{hybrid}+\sigma)-\mathrm{Risk}(\text{learner}+\sigma)
=(1-m)\bigl(\rC-\lC\bigr),
\label{eq:decomp}
\end{equation}
where $\rC=\mathbb{E}[\ell(\sigma\circ R)\mid C]$ and
$\lC=\mathbb{E}[\ell(\sigma\circ L)\mid C]$.
\end{proposition}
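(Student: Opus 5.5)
The plan is to expand both risks by the law of total expectation over the partition $\{C,I\}$ and subtract. Write $\mathrm{Risk}(\text{arm}+\sigma)=\mathbb{E}[\ell(\sigma\circ \text{arm})]$, where $\ell(\sigma\circ f)$ denotes the per-request loss of the shielded action $\sigma(X,f(X))$ against $P^{*}(X)$. Additivity of $\ell$ is exactly what lets aggregate risk split over the partition:
\[
\mathrm{Risk}(\text{arm}+\sigma)=(1-m)\,\mathbb{E}[\ell(\sigma\circ\text{arm})\mid C]+m\,\mathbb{E}[\ell(\sigma\circ\text{arm})\mid I].
\]
This uses $\Pr[X\in C]=1-m$, and each conditional expectation is well defined whenever the corresponding mass is positive. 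If $m\in\{0,1\}$, the term with zero weight is simply dropped.

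Next, I would use the definitions of the two arms. On $I$ the hybrid arm routes to $L$, so pointwise on $I$ the action it proposes is $L(X)$. Both arms are shielded by the same $\sigma$, which depends only on $(x,a)$ and $\alpha(x)$, so the shielded actions coincide on $I$. Hence
\[
\mathbb{E}[\ell(\sigma\circ\text{hybrid})\mid I]=\mathbb{E}[\ell(\sigma\circ L)\mid I],
\]
and the $I$-terms cancel in the difference. On $C$ the hybrid arm uses $R$ and the learner arm uses $L$. Their conditional risks are by definition $\rC$ and $\lC$, which leaves $(1-m)(\rC-\lC)$ and gives \eqref{eq:decomp}.

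The only step needing care, and the one I expect to be the main obstacle, is making precise that the $I$-terms cancel. This requires $\sigma$ to be applied identically in both arms, as the statement stipulates. It also requires that $\ell$ be a function of the request and the final shielded action only, not of which component proposed it; I would state this as part of what "additive per-request loss" means. If the learner $L$ is randomised, the same argument goes through provided $L$'s randomness on $I$ has the same law in both arms, with expectations taken jointly over $X$ and that randomness. None of (A1)--(A3) or Lemma~\ref{lem:conform} is needed, because the identity is pure bookkeeping over the partition. I would close by noting that the result holds separately for each additive loss (FNG, FP, $1-$accuracy), with $\rC$ and $\lC$ depending on the chosen $\ell$.
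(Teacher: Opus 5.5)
Your proposal is correct and follows the paper's argument: split each risk over the partition $\{C,I\}$, observe that both arms apply $\sigma\circ L$ on $I$ so those terms cancel, and the remaining covered-region terms contribute $(1-m)(\rC-\lC)$. Your extra care makes explicit what the paper's one-line proof leaves implicit: the loss depends only on the request and the final shielded action, any randomness in $L$ is shared across arms, and the cases $m\in\{0,1\}$ are handled. Like the paper, you note that nothing about $\sigma$ beyond its identical application to both arms is used.
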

\begin{proof}
Both arms apply $\sigma\circ L$ on $I$, so the informative-region terms are
identical and cancel; the covered-region terms differ only in which decision
function is applied, and carry weight $1-m$.
\end{proof}

The proof is one line and we claim no depth for it; it uses nothing about
$\sigma$ beyond its being applied identically to both arms, so it holds equally
with no shield. Its value lies in what it forces a reporting protocol to
contain, which the rest of this section develops.

\emph{Scope: additive losses only.} \eqref{eq:decomp} requires the loss to be
additive across requests. FP and FNG are; macro-F1 is not, because it is
a non-linear function of per-class confusion counts and its value on a union of
regions is not the mass-weighted average of its values on the parts. We use
additive losses for every theorem-backed statement and report macro-F1 only
descriptively.

\emph{Conditional rates require the conditional mass.} FNG and FP are
conditional rates, not unconditional risks. The conditioning set is a property of
$Y$ alone and is therefore identical across arms, so the identity survives, but
the mass factor becomes the covered mass \emph{within that set}:
\begin{align}
\Delta\mathrm{FNG} &= (1-m_{G})(\rC^{G}-\lC^{G}), & G&=\{\Grant(Y)\},
\label{eq:decompG}\\
\Delta\mathrm{FP}  &= (1-m_{N})(\rC^{N}-\lC^{N}), & N&=\{\neg\Grant(Y)\}.
\label{eq:decompN}
\end{align}
This matters in practice. On E-Document under class-aware selection the three
masses differ sharply: the overall informative mass is $0.845$, but $m_{G}=0.257$
and $m_{N}=0.879$ (Table~\ref{tab:contrast}). A decomposition applied with the
overall $m$ would misattribute the false-non-grant effect by more than a factor
of four. Any report of a region-routing effect must state which subpopulation its
mass is conditioned on.

\subsection{The contrast is not one number}
\label{sec:twoloss}
Equations~\eqref{eq:decompG} and~\eqref{eq:decompN} are two different identities
over two different subpopulations, so \emph{the contrast is a different quantity
for each loss}. This is the point of C3, and it has a sharp consequence that a
single-loss report conceals:

\begin{quote}
\emph{A one-class rule set is trivially optimal on whichever loss its single
class cannot incur.} An all-deny rule set never over-grants, so $\rC^{N}=0$ and
its false-permit contrast is as favourable as it can be. An all-permit rule set
never withholds a grant, so $\rC^{G}=0$ and its false-non-grant contrast is as
favourable as it can be.
\end{quote}

A study that reports the contrast on one additive loss can therefore make a
degenerate rule set look like the right thing to route to, and can do so in
either direction depending on which loss it picks. Routing to rules is
justified on both losses only when both contrasts are negative, and we report
both throughout.
We know of no prior report of a rules-versus-learner routing effect that does so.

Three further consequences of \eqref{eq:decomp} are worth stating.

\emph{(i) Mass attenuates mechanically, but is not exogenous.} $1-m\geq0$, so
\emph{holding $\rC$ and $\lC$ fixed} the mass factor scales the effect and cannot
flip its sign. That much is algebra; it does not license the stronger reading
that mass is irrelevant, because $m$ is not exogenous to the contrast. Reshaping
the request distribution to raise $m$ can also starve the learner of
covered-region training data, raising $\lC$ while leaving $\rC$ unchanged. Mass
can therefore act through two channels, the mechanical $(1-m)$ factor and an
induced change in $\lC$, and only the first is sign-preserving. The common
justification for a hybrid architecture---``the rules leave a large region
unresolved, so add a learner''---is an argument about \emph{magnitude}, not
about \emph{direction}.

\emph{(ii) Covered-region rule quality alone does not sign it either.} The sign
is carried by the contrast. A rule set with $q=0.9$ should not be routed to if
the learner scores $0.95$ on the same requests.

\emph{(iii) The contrast is recoverable from published aggregates.} Since
\eqref{eq:decomp} is an identity and $m<1$, $\rC-\lC=\Delta/(1-m)$, so the
operative quantity can be read off any experiment reporting an aggregate effect
and a region mass, with its interval rescaled by $1/(1-m)$ when $m$ is treated
as fixed. We nonetheless
\emph{measure} $\rC$ and $\lC$ directly below, because the recovered quantity
coincides with the measured one only when the identity's conditioning
subpopulation is respected, and that is easy to get wrong.

\subsection{Measurement on a real policy: E-Document}
\label{sec:edoc}
\emph{Setup.} 20 seeds; per seed, disjoint 12{,}000-row training and
8{,}000-row test splits drawn uniformly from the 600{,}000-tuple enumeration;
label noise $\varepsilon=0.15$ on the training split only, clean $P^{*}$ labels
on test; identical test draws across selection rules within a seed;
gradient-boosted trees (\texttt{gbdt}, scikit-learn defaults) with
class-balanced sample weights and a $0.5$ threshold. Candidates are enumerated
at $L=3$ with minimum support 5 and minimum confidence $0.60$, and capped at
300 rules. The miner and the learner are fitted on the same evidence---both see
only the noisy training split; mining on the clean relation while the learner
fits noisy labels would hand the rules an advantage unrelated to routing.
Table~\ref{tab:contrast} compares $R$ and $L$ without a shield. Because the
E-Document hard-deny set admitted in Section~\ref{sec:c1} is sound, shielding
both arms with it would leave every false-non-grant term unchanged and would
set both arms' false permits on $H\cap C$ to zero.

\begin{table}[t]
\centering
\caption{Direct measurement of the decomposition terms on E-Document, on
both additive losses.}
\vspace{-2pt}
{\footnotesize\raggedright \texttt{gbdt} with class-balanced weights, 20 seeds,
identical test draws, 2000-resample paired bootstrap. $\rC$ and $\lC$ are
measured, not implied. A negative contrast favours routing to rules \emph{on that
loss}; the verdict column requires both. Note the two degenerate rows: the
all-deny \texttt{support} rule set attains the best possible false-permit
contrast ($\rC^{N}=0$) and the worst possible false-non-grant contrast
($\rC^{G}=1$), and the all-permit \texttt{lift} rule set does the reverse. Either
row can be made to look like a positive result by choosing a loss. Only
class-aware selection is favourable on both.\par}
\vspace{2pt}
\label{tab:contrast}
\footnotesize
\setlength{\tabcolsep}{2.9pt}\resizebox{\textwidth}{!}{\begin{tabular}{@{}lccccclccccll@{}}
\toprule
& \multicolumn{2}{c}{cov.-region F1} & \multicolumn{4}{c}{false-non-grant (on grant subpop.)}
& \multicolumn{4}{c}{false-permit (on non-grant subpop.)} & \\
\cmidrule(lr){2-3}\cmidrule(lr){4-7}\cmidrule(lr){8-11}
selection rule & deny & perm. & $m_{G}$ & $\rC^{G}$ & $\lC^{G}$ & $\rC^{G}\!-\!\lC^{G}$ [95\% CI]
& $m_{N}$ & $\rC^{N}$ & $\lC^{N}$ & $\rC^{N}\!-\!\lC^{N}$ [95\% CI] & verdict \\
\midrule
support \emph{(all-deny)} & 0.971 & 0.000 & 0.000 & 1.000 & 0.058 & $+0.942\ [+0.932,+0.950]$
                            & 0.000 & 0.000 & 0.303 & $-0.303\ [-0.312,-0.294]$ & metric-dep. \\
conf. (support tb) & 0.990 & 0.179 & 0.931 & 0.870 & 0.148 & $+0.721\ [+0.640,+0.796]$
                            & 0.839 & 0.000 & 0.188 & $-0.188\ [-0.197,-0.178]$ & metric-dep. \\
conf. (low-supp. tb) & 0.971 & 0.133 & 0.927 & 0.917 & 0.105 & $+0.812\ [+0.759,+0.859]$
                            & 0.930 & 0.003 & 0.333 & $-0.330\ [-0.348,-0.310]$ & metric-dep. \\
lift \emph{(all-permit)} & 0.000 & 0.766 & 0.122 & 0.000 & 0.048 & $-0.048\ [-0.060,-0.038]$
                            & 0.968 & 1.000 & 0.715 & $+0.285\ [+0.264,+0.308]$ & metric-dep. \\
\textbf{class-balanced}     & \textbf{0.931} & \textbf{0.852} & 0.257 & 0.011 & 0.044
                            & $\mathbf{-0.033\ [-0.049,-0.020]}$
                            & 0.879 & 0.125 & 0.240 & $\mathbf{-0.115\ [-0.127,-0.104]}$
                            & \textbf{rules} \\
\bottomrule
\end{tabular}}
\end{table}

\begin{figure}[t]
\centering
\includegraphics[width=0.72\linewidth]{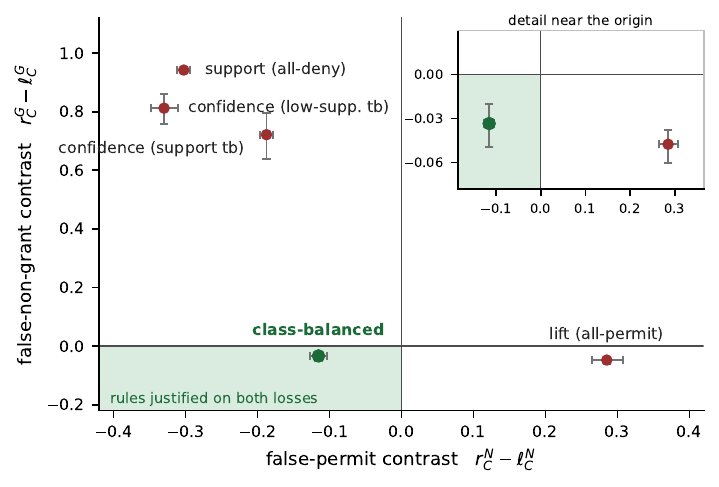}
\caption{The routing verdict is a point in two-loss space, not a sign
(E-Document, \texttt{gbdt} class-balanced, 20 seeds, 95\% paired-bootstrap
intervals; the data of Table~\ref{tab:contrast}). Routing the covered region to
rules is justified only in the shaded quadrant, where both contrasts are
negative. The one-class rule sets sit in opposite off-diagonal positions: an
all-deny rule set cannot over-grant, so its false-permit contrast is as
favourable as arithmetic allows, and an all-permit rule set cannot withhold a
grant, so its false-non-grant contrast is. Either can be presented as a positive
result by choosing a loss. Only class-aware selection lies in the shaded
quadrant; the inset magnifies the two points near the origin and shows that both
of its intervals exclude zero.}
\label{fig:twoloss}
\end{figure}

Four of five selection rules give opposite verdicts on the two losses
(Fig.~\ref{fig:twoloss}, Table~\ref{tab:contrast}). The all-deny support-ranked rule set has a
false-non-grant contrast of $+0.942$---on the region it claims to cover it denies
every request the policy grants---and simultaneously the best false-permit
contrast in the table, $-0.303$, for the trivial reason that it never grants at
all. The all-permit lift-ranked rule set is its mirror image, $-0.048$ on
false-non-grant and $+0.285$ on false-permit. A study reporting either loss alone
would recommend routing to a rule set that is one-class.

Only class-aware selection is favourable on both, with contrasts
$-0.033$ $[-0.049,-0.020]$ and $-0.115$ $[-0.127,-0.104]$, both intervals
excluding zero. This is the only configuration in our search for which the
routing decision is determined rather than metric-dependent, and it is the same
selection rule Section~\ref{sec:c2} finds non-degenerate everywhere.

As an implementation check, the measured effects reproduce
Proposition~\ref{prop:decomp} to floating-point precision: over all 200 rows
(5 selection rules $\times$ 2 losses $\times$ 20 seeds) the largest absolute
residual $\Delta_{\mathrm{obs}}-\Delta_{\mathrm{pred}}$ is $6.6\times10^{-17}$.

\subsection{Robustness of the verdict}

\emph{Label noise.} A miner aggregating over a support set attenuates label
noise while a learner fitting the same labels need not, so the rules should be
favoured more as noise rises---until noise also degrades the mined rules.
Table~\ref{tab:epssweep} sweeps $\varepsilon$ from $0$ to $0.25$ with 20 seeds
per level. Under class-aware selection the two-loss verdict is
\texttt{route-to-rules} for every $\varepsilon\in[0.02,0.15]$, with the
false-permit contrast growing in magnitude from $-0.043$ to $-0.121$ over that
range. It is not determined at either end. At $\varepsilon=0$, where the learner
sees clean labels, the false-permit contrast is $+0.004$ $[+0.003,+0.005]$---small,
but positive with an interval excluding zero---so the two losses disagree. At
$\varepsilon=0.25$ the false-non-grant interval, $[-0.025,+0.008]$, includes zero:
the noise now damages the rules as well (covered-region permit-F1 falls from
$0.93$ at $\varepsilon=0$ to $0.62$). The routing benefit on this policy is
therefore a noise-attenuation effect confined to a band of noise levels, not a
noise-free property of the rule set. A report on false-non-grant alone would
miss the lower end of that band: on that loss the contrast is negative even at
$\varepsilon=0$.

\begin{table}[t]
\centering
\caption{Sensitivity of the two-loss verdict to the label-noise level
(E-Document, \texttt{gbdt} class-balanced, 20 seeds per level; each level draws
its own splits, so the $\varepsilon=0.15$ rows differ slightly from
Table~\ref{tab:contrast}). The \texttt{support} rows are metric-dependent at
every level, in the direction Table~\ref{tab:contrast} explains. The
class-balanced rows favour rules on both losses for
$0.02\leq\varepsilon\leq0.15$; at $\varepsilon=0$ the losses disagree, and at
$\varepsilon=0.25$ the false-non-grant interval includes zero.}
\label{tab:epssweep}
\footnotesize
\setlength{\tabcolsep}{4pt}\begin{tabular}{@{}lcll@{}}
\toprule
$\varepsilon$ & rule & FNG contrast [95\% CI] & FP contrast [95\% CI] \\
\midrule
0.00 & support & $+0.976\ [+0.969,+0.981]$ & $-0.114\ [-0.119,-0.110]$ \\
     & class-bal. & $-0.010\ [-0.016,-0.006]$ & $+0.004\ [+0.003,+0.005]$ \\
0.02 & support & $+0.969\ [+0.964,+0.975]$ & $-0.182\ [-0.186,-0.178]$ \\
     & class-bal. & $-0.017\ [-0.022,-0.012]$ & $-0.043\ [-0.058,-0.031]$ \\
0.05 & support & $+0.962\ [+0.952,+0.970]$ & $-0.221\ [-0.225,-0.217]$ \\
     & class-bal. & $-0.023\ [-0.033,-0.015]$ & $-0.069\ [-0.078,-0.061]$ \\
0.10 & support & $+0.952\ [+0.946,+0.958]$ & $-0.268\ [-0.275,-0.262]$ \\
     & class-bal. & $-0.031\ [-0.037,-0.024]$ & $-0.100\ [-0.117,-0.084]$ \\
0.15 & support & $+0.943\ [+0.932,+0.952]$ & $-0.306\ [-0.313,-0.300]$ \\
     & class-bal. & $-0.028\ [-0.038,-0.019]$ & $-0.121\ [-0.140,-0.102]$ \\
0.25 & support & $+0.911\ [+0.902,+0.920]$ & $-0.375\ [-0.383,-0.367]$ \\
     & class-bal. & $-0.009\ [-0.025,+0.008]$ & $-0.066\ [-0.091,-0.039]$ \\
\bottomrule
\end{tabular}
\end{table}

\emph{Learner configuration.} A favourable contrast could equally mean the
learner was badly configured. The base rate is $5.5\%$, and a $0.5$ decision
threshold on such a problem is conservative, so the comparison must give the
learner its best configuration. We refit it in four ways---default, with
class-balanced sample weights, with a decision threshold tuned for macro-F1 on a
held-out (noisy) validation split, and with both---holding the mined rules fixed
so that any change is attributable to the learner alone
(Table~\ref{tab:fair}).

\begin{table}[t]
\centering
\caption{Is the routing verdict an artefact of an under-tuned learner?
E-Document, class-aware selection, $\varepsilon=0.15$, 20 seeds; the mined rules
are identical across rows ($\rC^{G}=0.012$ throughout), so all variation is the
learner's. ``tuned'' selects the decision threshold on a held-out validation
split (6{,}000 rows, drawn in addition to the training and test splits, so the
\texttt{balanced} row is close to but not identical with the class-balanced row
of Table~\ref{tab:contrast}). F1 is the learner's macro-F1; FNG is the
false-non-grant contrast. Only one of four configurations yields a
\texttt{route-to-rules} verdict; on false-non-grant alone every row favours the
rules.}
\label{tab:fair}
\footnotesize
\setlength{\tabcolsep}{3.5pt}\begin{tabular}{@{}lccrll@{}}
\toprule
learner & thr. & F1 & FNG & FP contrast [95\% CI] & verdict \\
\midrule
default & .500 & .587 & $-0.862$ & $+0.125\,[+.109,+.142]$ & metric-dep. \\
\textbf{balanced} & .500 & .547 & $\mathbf{-0.027}$ & $\mathbf{-0.108\,[-.126,-.091]}$ & \textbf{rules} \\
tuned & .255 & .709 & $-0.180$ & $+0.028\,[+.010,+.046]$ & metric-dep. \\
bal.+tuned & .582 & .699 & $-0.197$ & $+0.034\,[+.017,+.050]$ & metric-dep. \\
\bottomrule
\end{tabular}
\end{table}

The two-loss verdict does not survive threshold tuning. Under a class-balanced
learner at the default threshold,
routing to rules improves both losses. Under a threshold-tuned learner the
false-permit contrast becomes significantly \emph{positive} ($+0.028$
$[+0.010,+0.046]$ tuned, $+0.034$ $[+0.017,+0.050]$ balanced-and-tuned): the
learner, once permitted to move its threshold, over-grants less on the covered
region than the rules do, and routing there costs false permits while saving
false non-grants. The supported conclusion on this policy is therefore not
``routing to rules pays'' but:

\begin{quote}
On E-Document under class-aware selection, routing the covered region to rules
reduces false non-grants under every learner configuration tested, and reduces
false permits under only one of them. Which arm is preferable is undetermined
without an explicit relative price for the two errors, which is a deployment
input we do not hold.
\end{quote}

A report restricted to false-non-grant would have stated, truthfully and
misleadingly, that the sign is stable across all four learner configurations.
It is---on one loss. This is direct evidence for the reporting requirement of
Section~\ref{sec:twoloss}. More generally, because Proposition~\ref{prop:decomp}
makes the routing effect an identity in $m$, $\rC$ and $\lC$, a claim that
covered-region quality or informative mass ``governs'' hybrid benefit describes
a particular design until the decomposition terms themselves are reported; we
therefore recommend measuring $\rC$ and $\lC$ directly, as here.

\subsection{Replication on a second policy: Workforce}
\label{sec:workforce}
We repeated the measurement on Workforce, the other real-world-derived policy,
with the miner at $L=2$ (Section~\ref{sec:ranking}) and, because its permit base
rate is $2.0\%$, 50{,}000-row training and 20{,}000-row test splits; everything
else is as for E-Document (20 seeds, $\varepsilon=0.15$, class-balanced
\texttt{gbdt}, both rule set and learner fitted on the noisy training split).
Table~\ref{tab:workforce} reports the two contrasts.

\begin{table}[t]
\centering
\caption{The two-loss contrast on Workforce ($L=2$, 20 seeds,
$\varepsilon=0.15$, class-balanced \texttt{gbdt}). Every selection rule's
rules withhold grants that the learner makes (positive false-non-grant
contrast) and over-grant less than the learner (negative false-permit
contrast), with every interval excluding zero; no selection rule is favourable
on both losses. Lift and class-aware selection retain the same rule set
here.}
\label{tab:workforce}
\footnotesize
\setlength{\tabcolsep}{2.5pt}
\begin{tabular}{@{}lrll@{}}
\toprule
selection rule & $m_{G}$ & FNG contrast [95\% CI] & FP contrast [95\% CI] \\
\midrule
support (all-deny) & 0.000 & $+.969\,[+.962,+.975]$ & $-.198\,[-.206,-.190]$ \\
conf.\ (support tb) & 0.976 & $+.737\,[+.662,+.810]$ & $-.149\,[-.160,-.138]$ \\
conf.\ (low-supp.\ tb) & 0.975 & $+.738\,[+.661,+.813]$ & $-.150\,[-.160,-.139]$ \\
lift & 0.958 & $+.348\,[+.272,+.424]$ & $-.146\,[-.156,-.134]$ \\
class-balanced & 0.958 & $+.348\,[+.272,+.424]$ & $-.146\,[-.156,-.134]$ \\
\bottomrule
\end{tabular}
\end{table}

Every one of the five selection rules gives opposite verdicts on the two
losses. Routing Workforce's covered region to rules trades false non-grants
for false permits under every selection rule we tried, so the routing decision
on this policy is undetermined without a price for the two errors---and a study
reporting only false permits would recommend rules, while one reporting only
false non-grants would recommend the learner. Class-aware selection is again
non-degenerate (covered-region permit-F1 $0.48$), but the E-Document result that
it favours rules on both losses does not replicate. What replicates is the
divergence between the losses, which is the claim Section~\ref{sec:twoloss} makes.

\section{The Composed System}
\label{sec:eval}
The audits above examine the shield's inputs. This section measures what they
do inside a composed system: first on a real policy, with the hard-deny set that
C1 certified, and then on an authored synthetic policy that exercises the parts
of the mechanism no public benchmark supplies---a five-action space, a
break-glass branch and a residual surface. We label every zero as \emph{by
construction} (entailed by Lemma~\ref{lem:conform}) or \emph{observed}
(empirical, with denominator).

\subsection{On real policies}
\label{sec:eval-real}
We reuse the settings of Section~\ref{sec:edoc} (E-Document) and
Section~\ref{sec:workforce} (Workforce): class-balanced \texttt{gbdt},
$\varepsilon=0.15$, class-aware selection, 20 seeds. Each policy is shielded
with two hard-deny sets. The \emph{certified} set is the union of R1's clauses
for that policy that passed the soundness check; it covers
$4.2\%$ of E-Document's request space and
$15.7\%$ of Workforce's. The \emph{unchecked} set is
the union of all of R1's candidates for that policy, shipped without the check.
A label-conditional conformal gate (Section~\ref{sec:disjoint}) calibrated on a
disjoint 8{,}000-row split is included for comparison.

\begin{table}[t]
\centering
\caption{The composed system on two real policies (class-balanced \texttt{gbdt},
$\varepsilon=0.15$, 20 seeds; means). FP is the false-permit rate on non-granted
requests, FNG the false-non-grant rate on granted ones, and $H$-grant the grant
rate on requests in the certified set; zeros in that column for shielded arms
are by construction. Bootstrap intervals are in the artefact.}
\label{tab:realarms}
\footnotesize
\setlength{\tabcolsep}{2.4pt}
\begin{tabular}{@{}lrrrrrr@{}}
\toprule
 & \multicolumn{3}{c}{E-Document} & \multicolumn{3}{c}{Workforce}\\
\cmidrule(lr){2-4}\cmidrule(lr){5-7}
arm & FP & FNG & $H$-gr. & FP & FNG & $H$-gr.\\
\midrule
learner & 0.303 & 0.058 & 0.415 & 0.198 & 0.031 & 0.211 \\
\quad $+$ shield, certified $H$ & 0.284 & 0.058 & 0.000 & 0.164 & 0.031 & 0.000 \\
\quad $+$ shield, unchecked $H$ & 0.080 & 0.538 & 0.000 & 0.158 & 0.580 & 0.000 \\
\midrule
hybrid & 0.288 & 0.033 & 0.395 & 0.187 & 0.046 & 0.198 \\
\quad $+$ shield, certified $H$ & 0.271 & 0.033 & 0.000 & 0.155 & 0.046 & 0.000 \\
\quad $+$ shield, unchecked $H$ & 0.077 & 0.522 & 0.000 & 0.149 & 0.590 & 0.000 \\
\midrule
conformal@0.01 & 0.010 & 0.695 & 0.002 & 0.010 & 0.221 & 0.002 \\
\quad $+$ shield, certified $H$ & 0.010 & 0.695 & 0.000 & 0.010 & 0.221 & 0.000 \\
\bottomrule
\end{tabular}
\end{table}

Three things are visible in Table~\ref{tab:realarms}. First, the learner grants
on $41\%$ (E-Document) and $21\%$ (Workforce) of the requests the
certified set names; the shield removes every one of them and leaves the
false-non-grant rate unchanged to the last digit, because a sound $H$ contains no
granted request. Second, the unchecked set looks better on the loss a shield is
usually judged by---E-Document's false-permit rate falls from $0.30$ to
$0.08$---while it withholds $54\%$ and $58\%$ of everything the two
policies grant. On Workforce it buys almost nothing for that price: false permits
are $0.158$ against $0.164$ with the certified set. A study
reporting false permits alone would prefer the unchecked set on both policies.
Third, the conformal gate meets its false-permit target at $\alpha=0.01$ at a
false-non-grant rate of $0.70$ (E-Document) and $0.22$ (Workforce), and still grants on
a small fraction of the certified set, which the shield closes, as
Proposition~\ref{prop:disjoint} predicts.

\subsection{An authored policy with the full mechanism}
The remaining subsections use a policy we wrote, so nothing in them is
evidence about real policies; they demonstrate that the protocol is
implementable and that the propositions of Section~\ref{sec:theory} have the
consequences claimed.

\emph{Setup.} A latent policy of 15 rules (7 exceptions, 3 context modifiers)
in a critical-resource setting, with structural elements drawn from published
audit-trail and break-glass
requirements~\cite{brucker2009breakglass,rostad2006audit}. It is a
policy-derived benchmark, not an access log, and carries no clinical-deployment
claim. Three mutually disjoint samples per seed: training $n=6000$, calibration
$n=4000$, test $n=4000$; training labels are noisy realisations of $P^{*}$
($\varepsilon=0.15$), calibration and evaluation labels are $P^{*}$. Six
spoofable attributes are withheld from every learner and baseline. 20 seeds for
the arm comparison, 30 for the assumption-failure sweep. Mean per-seed test
denominators: ${\approx}2908$ non-grant, 1092 true-grant, 2781 hard-deny, 125
residual-surface.

\subsection{Report denial and abstention separately}
\label{sec:protocol}
Conditioning the cost side on the literal \textsc{deny} action admits a
degenerate optimum for any mechanism with an abstention. A selective predictor at
$\tau=0.95$ records literal denial $0.00005$ on true-grant requests while
diverting $0.757$ of them to \textsc{escalate}. That an abstaining mechanism can
drive a denial-conditioned metric to zero is definitional~\cite{chow1970}; we note
it because arms in this literature are routinely compared on the denial rate
alone. We therefore report $\mathrm{DNG}$ (denial on true-grant) and
$\mathrm{ENG}$ (escalation on true-grant) separately and combine them as
$C(\lambda)=\mathrm{DNG}+\lambda\,\mathrm{ENG}$.

$\lambda$ is a stipulated price, not a measured one, and the recommendation is
not invariant in it. Minimising $C(\lambda)$ subject to $\mathrm{FP}\leq0.02$
over all \texttt{gbdt} arms, the selected arm changes five times below
$\lambda\approx0.21$, walking down the abstention threshold:
\texttt{selective@0.95} on $[0,0.0023)$, \texttt{selective@0.9} on
$[0.0023,0.0281)$, \texttt{selective@0.8} on $[0.0281,0.1007)$,
\texttt{selective@0.7} on $[0.1007,0.1671)$, \texttt{selective@0.6} on
$[0.1671,0.2142)$, and \texttt{hybrid+shield} from $\lambda\approx0.214$ upward.
A paper that silently fixes $\lambda=1$ recommends the shielded hybrid; a
deployment in which an escalation resolves cheaply sits below the crossover and
should run a selective arm. Nothing in the accuracy data distinguishes these two
recommendations, and we hold no escalation-latency data, so we report the
components separately and decline to publish a single cost figure.

\subsection{The two mechanisms bound disjoint quantities}
\label{sec:disjoint}
The statistical mechanism we compose with the shield is a label-conditional
split-conformal gate (Fig.~\ref{alg:gate}). It is \emph{not} conformal risk
control in the sense of~\cite{angelopoulos2024crc}: the controlled quantity is a
single binary error probability conditional on the true label being non-granting,
so the classical split-conformal quantile argument applies directly. Because tree
ensembles produce tied scores, it is the $p$-value form, rather than a raw
quantile threshold, that makes the guarantee applicable here.

\begin{figure}[t]
\centering
\fbox{\parbox{0.90\linewidth}{\footnotesize
\textbf{Require:} model $f$; calibration set $\mathcal{D}_{\mathrm{cal}}$ disjoint
from training and test; target $\alpha$.\\
1. $s(x)\leftarrow\sum_{a:\Grant(a)}f(a\mid x)$\\
2. $S\leftarrow\{s(x_i):(x_i,y_i)\in\mathcal{D}_{\mathrm{cal}},\ \neg\Grant(y_i)\}$\\
3. $n\leftarrow|S|$ \hfill\emph{($n$ is the non-grant count)}\\
4. $p(x)\leftarrow\bigl(1+|\{s_i\in S: s_i\geq s(x)\}|\bigr)/(n+1)$
\hfill\emph{conservative}\\
5. \textbf{return} $\hat a(x)=\arg\max_{a:\Grant(a)}f(a\mid x)$ if
$p(x)\leq\alpha$, else \textsc{deny}
}}
\caption{The label-conditional conformal grant gate (tie-safe $p$-value form).
Training, calibration and test splits are mutually disjoint, and
hyperparameters are fixed a priori rather than tuned on calibration data.}
\label{alg:gate}
\end{figure}

\begin{proposition}[Disjointness, regime-indexed]
\label{prop:disjoint}
Let $H$ be sound, so the non-grant class is the disjoint union of $H$ and
$\Sres$, and let $\pi_H=\Pr[X\in H\mid\neg\Grant(Y)]\in(0,1)$.
(i)~\emph{Benign regime.} If a non-grant test score is exchangeable with the
non-grant calibration scores, the gate of Fig.~\ref{alg:gate} satisfies
$\Pr[\Grant(\hat a(X))\mid\neg\Grant(Y)]\leq\alpha$.
(ii)~The bound in (i) does not constrain grants on $H$: for every $\alpha>0$ it
is compatible with $\Pr[\Grant(\hat a(X))\mid X\in H]=\min(1,\alpha/\pi_H)>0$.
Conversely, $\sigma$ does not constrain grants on $\Sres$
(Lemma~\ref{lem:conform}(ii)). Neither mechanism implies the other, and
composing $\sigma$ after the gate retains both guarantees.
(iii)~\emph{Corrupted regime.} If the scores of corrupted inputs are not
exchangeable with the clean calibration scores, (i) need not hold, whereas
Proposition~\ref{prop:directional} still bounds grants on $H$.
\end{proposition}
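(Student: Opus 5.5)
The plan is to prove the three parts separately. Part (i) is the standard split-conformal validity argument, part (ii) is an explicit construction plus a pointer to Lemma~\ref{lem:conform}(ii), and part (iii) is a counterexample plus Proposition~\ref{prop:directional}.

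\emph{(i) Benign regime.} Condition on $\neg\Grant(Y)$. The gate grants only if $p(X)\le\alpha$; since $\hat a$ is the argmax over granting actions, its output is granting exactly when $p(X)\le\alpha$. Let $s_{n+1}=s(X)$ and suppose it is exchangeable with $s_1,\dots,s_n$. Then the rank $R$ of $s_{n+1}$ among all $n+1$ scores, counting from the top with ties counted against the test point, satisfies $\Pr[R\le k]\le k/(n+1)$. The tie-safe $p$-value is exactly $p=R/(n+1)$ with $R=1+|\{i\le n: s_i\ge s_{n+1}\}|$. Hence $\Pr[p\le\alpha]=\Pr[R\le\lfloor\alpha(n+1)\rfloor]\le\alpha$. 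This uses the $\ge$ in step~4 of Fig.~\ref{alg:gate} and needs no continuity of $f$. Ties only make $R$ stochastically larger, and I would state this explicitly because tree ensembles tie. Conditioning on $n$ and then averaging preserves the bound.

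\emph{(ii) Non-constraint.} Soundness gives $H\subseteq\Sg$, so $\Sg=H\sqcup\Sres$. By total probability,
\[
\Pr[\Grant\mid\neg\Grant(Y)]=\pi_H\,\Pr[\Grant\mid H]+(1-\pi_H)\,\Pr[\Grant\mid\Sres].
\]
To show compatibility I would exhibit a gate behaviour (a score function) that places all of its granting mass on $H$:
\begin{itemize}
\item set $\Pr[\Grant\mid\Sres]=0$;
\item set $\Pr[\Grant\mid H]=\min(1,\alpha/\pi_H)$;
\item concretely, take scores on $H$ to be a high score with the appropriate probability.
\end{itemize}
The marginal grant rate is then $\pi_H\min(1,\alpha/\pi_H)\le\alpha$, so (i) holds while the grant rate on $H$ is strictly positive.

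The step I expect to need care is showing that this construction is realizable by the conformal gate itself, not merely by an abstract rule with marginal rate $\alpha$. I would realize it by a score distribution in which a fraction $\min(\alpha,\pi_H)$ of non-grant mass, all inside $H$, receives the top score. The tie-safe $p$-value then gives it rank at most roughly $\alpha(n+1)$, and an exact $\alpha$ can be matched in the large-$n$ or randomized sense. Failing that, I would state (ii) as compatibility with the bound of (i), which is all the statement asserts.

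The converse direction is Lemma~\ref{lem:conform}(ii): $\sigma$ is the identity on $\Sres$. For composition, $\sigma$ only replaces actions with non-granting ones. Grants under $\sigma\circ\text{gate}$ are therefore a subset of the gate's grants, so the $\alpha$ bound survives, and Lemma~\ref{lem:conform}(i) still applies because $\sigma$ runs last.

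\emph{(iii) Corrupted regime.} Exchangeability fails, so the rank argument breaks. A single example suffices: a corrupted non-grant input mapped to a score above every calibration score gets $p=1/(n+1)\le\alpha$ and is granted. Making this happen with probability greater than $\alpha$ shows that (i) fails. Proposition~\ref{prop:directional} uses only (A3) and no distributional assumption, so it still gives a grant rate on $H$ of at most $\epsm$.
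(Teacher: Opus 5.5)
Your proposal is correct and follows essentially the same route as the paper: the tie-safe rank/super-uniformity argument for (i), the total-probability split $\pi_H g_H+(1-\pi_H)g_R$ together with Lemma~\ref{lem:conform}(ii) for (ii), and the observation that Proposition~\ref{prop:directional} needs only (A3) for (iii). Your additions are the explicit check that $\sigma$ can only remove grants, so the composition keeps the $\alpha$ bound, and a concrete non-exchangeable counterexample for (iii); both fill in steps the paper leaves implicit, and your realizability worry is moot because the paper, like your fallback, reads (ii) purely as arithmetic compatibility with the bound.
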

\begin{proof}
(i) Under exchangeability the rank of the test score among the $n+1$ non-grant
scores is uniform, and counting ties against the test point
($s_i\geq s(x)$) makes $p(X)$ super-uniform, so
$\Pr[p(X)\leq\alpha]\leq\alpha$~\cite{vovk2005alrw}; the gate grants only when
$p(X)\leq\alpha$. (ii) The marginal rate decomposes as
$\pi_H g_H+(1-\pi_H)g_R$, where $g_H$ and $g_R$ are the grant rates on $H$ and
$\Sres$; any $g_H\leq\min(1,\alpha/\pi_H)$ with a correspondingly small $g_R$
satisfies (i). The statement about $\sigma$ is Lemma~\ref{lem:conform}(ii).
(iii) Proposition~\ref{prop:directional} assumes only (A3) and no
exchangeability.
\end{proof}

\begin{table}[t]
\centering
\caption{Key arms on the synthetic policy (\texttt{gbdt}, 20 seeds; mean per-seed
denominators 2908/1092/2781/125 for FP/cost/$H$/residual). $H$-grant of $0.000$
for shielded arms is by construction (Lemma~\ref{lem:conform}); other
zeros are observed. Every shielded and conformal arm in this table has
$\mathrm{esc}=0.0000$, so the ranking among those arms is invariant in
$\lambda$; only the selective arms are $\lambda$-sensitive.}
\label{tab:arms}
\footnotesize
\setlength{\tabcolsep}{3.6pt}\begin{tabular}{@{}lrrrrrr@{}}
\toprule
arm & FP & deny & esc & $C(1)$ & $H$-grant & resid. \\
\midrule
deny-by-default   & 0.0000 & 1.0000 & 0.0000 & 1.0000 & 0.0000 & 0.000 \\
rules             & 0.2060 & 0.2714 & 0.0017 & 0.2731 & 0.2231 & 0.000 \\
rules+shield      & 0.0000 & 0.2198 & 0.0000 & 0.2198 & 0.0000 & 0.000 \\
learner (raw)     & 0.0289 & 0.0836 & 0.0000 & 0.0836 & 0.0238 & 0.276 \\
learner+shield    & 0.0160 & 0.0947 & 0.0000 & 0.0947 & 0.0000 & 0.276 \\
hybrid            & 0.2310 & 0.0507 & 0.0017 & 0.0524 & 0.2405 & 0.268 \\
hybrid+shield     & 0.0156 & 0.0766 & 0.0000 & 0.0766 & 0.0000 & 0.268 \\
selective@0.7     & 0.0041 & 0.0164 & 0.3038 & 0.3202 & 0.0029 & 0.044 \\
selective@0.95    & 0.0000 & 0.0001 & 0.7567 & 0.7568 & 0.0000 & 0.000 \\
valthresh@0.7     & 0.0137 & 0.1188 & 0.0000 & 0.1188 & 0.0125 & 0.155 \\
conformal@0.005   & 0.0046 & 0.2010 & 0.0000 & 0.2010 & 0.0037 & 0.059 \\
\quad +shield     & 0.0028 & 0.1717 & 0.0000 & 0.1717 & 0.0000 & 0.059 \\
conformal@0.05    & 0.0472 & 0.0444 & 0.0000 & 0.0444 & 0.0458 & 0.391 \\
\quad +shield     & 0.0225 & 0.0889 & 0.0000 & 0.0889 & 0.0000 & 0.391 \\
\bottomrule
\end{tabular}
\end{table}

Both halves of Proposition~\ref{prop:disjoint} are visible in
Table~\ref{tab:arms}. No conformal level zeroes $H$: at $\alpha=0.05$ the gate
still grants on $0.0458$ of hard-deny requests and at $\alpha=0.005$ on
$0.0037$, and it grants on a non-zero fraction at all six levels tested, because
the guarantee is marginal over the non-grant class. A practitioner who reads
``distribution-free false-permit control'' as a guarantee that hard-deny
requests are never granted is therefore wrong. Conversely, the shield leaves the
residual exactly unchanged ($0.276$ shielded and unshielded; the paired per-seed
difference is exactly $0$ on every seed), whereas the gate reduces it to $0.113$
at $\alpha=0.01$ and $0.059$ at $\alpha=0.005$. The gate's realised false-permit
rate tracks its target across $\alpha\in\{0.005,\dots,0.2\}$ ($0.00455\pm0.00074$
at $0.005$ and $0.19938\pm0.00469$ at $0.2$, mean $\pm$ 95\% half-width over 20
seeds; $n\approx2908$ non-grant calibration points, so
$\alpha_{\min}=1/(n+1)=3.4\times10^{-4}$), so the bound is not bought with
needless conservatism. This is a consistency check, not evidence of validity,
which follows from Proposition~\ref{prop:disjoint}(i).

\subsection{Assumption failure}
Whatever the learner does---always permit, grant at random, or be trained on
poisoned labels or features---the shielded arm grants on no request in $H$. This
is \emph{entailed} by Lemma~\ref{lem:conform} under (A2) and (A3), not observed:
the learner's output is discarded on $H$. An attack on the learner therefore
shifts to the residual surface, where the shield is silent---which is the
operative security statement, and the reason C1 (naming more of the surface,
soundly) matters.

Input-side corruption is the interesting case (Table~\ref{tab:corruption}). In
every condition the realised unsafe-grant rate lies below $\epsm$, as
Proposition~\ref{prop:directional} requires; the bound is loose for an honest
learner and tight under the equality condition---substituting an always-permit
learner, the per-seed paired difference between the realised $H$-grant rate and
$\epsm$ is exactly zero on every seed. The availability control row is the
point of the table: it corrupts attributes so as to add spurious hard
denials, giving $\epsp=0.442$ and symmetric disagreement $0.135$ with
$\epsm=0.000$ and a realised unsafe-grant rate of exactly $0.000$. A symmetric
error term would suggest non-zero safety exposure where there provably is none.
The dominant failure mode is fail-open on missing attributes ($\epsm=0.598$).
This, not learner compromise, is where a deployment's exposure
lives, and it is an input-integrity problem rather than a machine-learning one.

\begin{table}[t]
\centering
\caption{Directional assumption-failure analysis; every column computed on the
same 30 seeds and test draws. $\epsm$ is the safety-relevant hard-deny
false-negative rate, $\epsp$ the availability-side false-positive rate,
$\varepsilon_{\mathrm{sym}}$ symmetric disagreement; the last two columns are the
realised grant rate on $H$ under an honest learner and under an always-permit
(adversarial) learner.}
\label{tab:corruption}
\footnotesize
\setlength{\tabcolsep}{3.6pt}\begin{tabular}{@{}lrrrrr@{}}
\toprule
corruption & $\epsm$ & $\epsp$ & $\varepsilon_{\mathrm{sym}}$ & honest $L$ & advers.\ $L$ \\
\midrule
abstraction            & 0.2680 & 0.0000 & 0.1862 & 0.0009 & 0.2680 \\
missing attributes     & 0.5976 & 0.0000 & 0.4151 & 0.0279 & 0.5976 \\
stale policy           & 0.1477 & 0.0000 & 0.1026 & 0.0292 & 0.1477 \\
availability (control) & 0.0000 & 0.4421 & 0.1350 & 0.0000 & 0.0000 \\
\bottomrule
\end{tabular}
\end{table}

\section{Threats to Validity}
\label{sec:threats}
Each threat states what it limits \emph{and} what survives it.

\textbf{C1 elicitors: one author and three language models of one family.}
The author elicitor saw the verdicts; the three independent elicitors saw only
the protocol's inputs, but they are models from a single provider, and model
elicitors may share failure modes that human analysts do not. They also saw the
schema as the harness presents it, with at most twelve example values per
attribute, which led some to name value families by suffix (the clause
language supports this). \emph{Limits:} the rates of
Table~\ref{tab:indep} as estimates for human analysts, and any comparison
between model sizes. \emph{Survives:} the agreement between four elicitors on
which sentences fail (15 restriction sentences unsound for every elicitor that
formalised them); the counterexamples themselves, which are verified facts about
the policies whoever wrote the clause; the attribution of those counterexamples
to other rules; and the cost of shipping an unchecked set. Because the
mechanism stratum is assigned by a fixed lexical rule, its membership cannot
have been tuned to the verdicts. The released harness accepts human elicitors
through the same freeze-then-certify interface, and a study with practising
policy authors is the natural next step.

\textbf{Five policies are not a sample.} ABAC Lab contains five enumerable
policies and we audited all five; the mechanism stratum is non-empty on two.
\emph{Limits:} any prevalence reading, and any claim that rejection rate tracks
policy scale or tenant structure. \emph{Survives:} the observation that the
stratum is empty exactly on the hand-authored instances and non-empty exactly on
the case-study-derived ones, which is a structural claim about documentation
style rather than a rate.

\textbf{C2's degeneracy result is about cap-and-rank, not about published
miners.} \emph{Limits:} it says nothing about the quality of Xu--Stoller,
Rhapsody or their successors, and Section~\ref{sec:reference} shows the reference
arm produces a usable policy where they are applicable. \emph{Survives:} the
observation that the hybrid architecture needs a two-class abstaining rule set
that, to our knowledge, none of them emits, and that three natural
constructions fail in three directions. The search covers five selection rules, three cap sizes and one
conjunction cap under one miner; it is a search, not an enumeration.

\textbf{The C3 verdict is configuration-dependent and we report it as such.}
\emph{Limits:} every numerical value of a contrast, and the routing
recommendation itself, which Table~\ref{tab:fair} shows survives on both losses
for one of four learner configurations, Table~\ref{tab:epssweep} only within a
band of noise levels, and Table~\ref{tab:workforce} not at all on Workforce.
\emph{Survives:} the identity, which is exact; the divergence of the two
losses, which appears for four of five selection rules on E-Document and for
all five on Workforce; and the reporting requirement, for which the instability
of the class-aware verdict is itself evidence.

\textbf{Region labels are definitional.} \emph{Covered} means the matching rules
agree on one effect. A different convention---treating conflict as covered under
a priority order, or adopting default-deny semantics as in
Section~\ref{sec:reference}---moves mass between regions and changes $m$.
\emph{Limits:} the numerical values of $m$, $\rC$ and $\lC$. \emph{Survives:}
Proposition~\ref{prop:decomp}, which holds for any partition into a rules-served
and a learner-served part.

\textbf{No deployment logs.} E-Document and Workforce are real-world-derived,
but their object models are generated, so splits are in-distribution rather
than cold-start, and the composed-system measurements of
Section~\ref{sec:eval-real} use label noise we inject. The rest of
Section~\ref{sec:eval} uses an authored policy. \emph{Limits:} every number in
Section~\ref{sec:eval} as an estimate for a deployed system. \emph{Survives:}
C1, C2 and C3, none of which depends on a log, and the qualitative contrast
between certified and unchecked deny sets, which follows from soundness alone.

\subsection{What this paper does not establish about system security}
\label{sec:securityscope}
This is a policy-level conformance and soundness analysis under stated trust
and mediation assumptions. It is not an end-to-end security evaluation.
The following are \emph{assumed}, not enforced, measured, or defended against:
complete mediation of the actuator path; communication integrity; freedom from
time-of-check-to-time-of-actuation races; replay freshness; policy-update
atomicity; availability under denial of service; and integrity of the attestation
infrastructure. A deployment in which any of these fails is outside every
guarantee stated here. Four properties are established or assumed at four
different strengths and we do not combine them into a claim that the system is
safe: the deterministic shield guarantee (proved, conditional on (A2) and (A3));
hard-deny-set soundness (checkable, and shown to fail on real artefacts; not
implied by any verified authorizer); attribute and input integrity (assumed by
(A2), dropped in Proposition~\ref{prop:directional}, and the dominant measured
failure mode at $\epsm=0.598$); and the conformal marginal guarantee
(distribution-free under exchangeability, marginal over a class, void under
corruption).

\section{Conclusion}
A learned access-control system is normally presented as a verified shield plus a
measured learner. Both components can be sound and the composed system still
wrong, because two artefacts between them are hand-built and unchecked: the set
the shield enforces, and the map that decides who answers which request.

\textbf{What is proved.} That a hard-deny-first shield admits no grant on the set
it names and is the identity on the residual surface
(Lemma~\ref{lem:conform})---so the shield is exactly as good as that set, and no
better. That the unsafe-grant rate on the set under corrupted inputs is bounded
by the directional false-negative rate, with an exact equality condition
(Proposition~\ref{prop:directional}). That a marginal class-conditional guarantee
is silent about a sub-population of that class, so a conformal gate and a shield
bound disjoint quantities and compose only in the benign regime
(Proposition~\ref{prop:disjoint}). That under independent symmetric label-flip
noise a correct rule's expected observed confidence is $1-\varepsilon$ and its
probability of clearing a higher threshold decays exponentially in support
(Proposition~\ref{prop:ceiling}). And that for additive losses the
hybrid-versus-learner effect equals covered mass times the contrast between rule
and learner risk on the covered region, with the mass conditioned on the loss's
own subpopulation (Proposition~\ref{prop:decomp}).

\textbf{What was observed.} On the two ABAC Lab policies whose prose states
restrictions, faithfully transcribed restriction clauses cover granted requests
at rates of $0.50$ to $0.70$ across an author and three blind language-model
elicitors, with up to 7{,}665 counterexamples per clause over relations of up to
794{,}250 tuples; $97\%$ of the counterexamples are granted by other rules, and
shipping any of the four candidate sets unchecked would withhold $16$--$78\%$ of
everything the policies grant. Vacuous clauses, which pass soundness while
constraining nothing, occur in independent elicitations. A published-style
permit-only miner produces a usable policy and no region map at all;
cap-and-rank over a mixed-effect pool is one-class in $24/24$ cells for the two
commonest ranking keys. On E-Document four of five selection rules give opposite
routing verdicts on the two additive losses, and the one favourable on both is
so under one of four learner configurations and only for
$0.02\leq\varepsilon\leq0.15$; on Workforce no selection rule is favourable on
both. On E-Document's composed system, a certified hard-deny set removes every
grant the learner makes on it at no availability cost.

\textbf{What we recommend.} Table~\ref{tab:checklist} collects the reporting
practices this paper argues for. None requires deployment data; each is computed
from artefacts that any study of learned access control already holds.

\begin{table}[t]
\centering
\caption{Reporting checklist for evaluations of learned access control behind a
trusted layer.}
\label{tab:checklist}
\footnotesize
\setlength{\tabcolsep}{3pt}
\begin{tabular}{@{}>{\raggedright\arraybackslash}p{2.6cm}>{\raggedright\arraybackslash}p{9.5cm}l@{}}
\toprule
artefact & report & see \\
\midrule
hard-deny set & soundness certificate, coverage of each clause, size of the enumerated relation & \S\ref{sec:c1} \\
elicitation & protocol, number of elicitors, provenance strata reported apart & \S\ref{sec:provenance}, \S\ref{sec:independent} \\
mining & $\mathrm{minconf}$ against $\hat\varepsilon$; support distribution of retained rules & \S\ref{sec:noiseceiling} \\
region map & semantics (total or abstaining), class composition, full operator tuple & \S\ref{sec:c2} \\
routing & $\rC-\lC$ on \emph{every} additive loss, with the learner configuration & \S\ref{sec:c3} \\
cost & denial and abstention separately; any price $\lambda$ stated & \S\ref{sec:protocol} \\
inputs & directional $\epsm$ and $\epsp$, not symmetric disagreement & \S\ref{sec:theory} \\
\bottomrule
\end{tabular}
\end{table}

\textbf{What remains unknown.} Whether human policy authors write unsound
restriction clauses at rates resembling those of the four elicitors here; the
harness makes that study cheap. Whether the mechanism's prevalence tracks policy
scale or tenant structure, which five policies cannot address. Whether
class-aware selection remains necessary, or sufficient, on a policy with a
balanced base rate. Whether any selection rule yields a routing verdict that is
robust to learner tuning on both losses. And everything in
Section~\ref{sec:securityscope}.

\section*{Data and Code Availability}
All policy artefacts used here are public: the five ABAC Lab
policies~\cite{bui2025abaclab}. The complete analysis code---policy parser and
enumerator, hard-deny clause language with provenance tagging and exhaustive
certifier, the multi-elicitor harness, the rule miner with its five selection
rules and the permit-cover reference arm, the four noise processes, the shield,
the conformal gate, the two-loss decomposition measurement, and the synthetic
composed-system pipeline---is provided to the reviewers as supplementary
material and will be archived with a persistent identifier on acceptance. It
includes the exact prompts shown to the independent elicitors, their raw
outputs, and the frozen candidate files with fingerprints and timestamps. A
single script regenerates every CSV from which a number in this manuscript is
quoted and then checks each quoted value against the regenerated results. The
parser is validated against each policy's published statistics at run time. All
experiments are deterministic given their seeds and run on a commodity laptop
(Python~3 with NumPy, SciPy and scikit-learn).

\section*{Acknowledgment}
The authors used a generative AI assistant for code implementation, result
verification and manuscript editing. All experimental results were produced by
the released code and verified by the authors, who take full responsibility for
the content of this article.

\appendix
\section{The author elicitor's candidate clauses}
\label{app:r1}

Every candidate written by the author elicitor R1, with its provenance stratum (Section~\ref{sec:provenance}), its exhaustive verdict, its coverage (requests covered) and its counterexample count (covered requests the policy grants). Guards use the clause language of the released artefact.

{\scriptsize
\begin{longtable}{@{}l l l r r >{\raggedright\arraybackslash}p{8.4cm}@{}}
\toprule ID & strat. & verdict & cov. & c/ex & clause \\ \midrule \endhead
\bottomrule \endfoot
HC1 & excl. & unsound & 68 & 9 & \texttt{action in {[}'addItem'{]} AND resource.type in HR AND NOT user.position in nurse} \\
HC2 & nec. & unsound & 68 & 9 & \texttt{action in {[}'addItem'{]} AND resource.type in HR AND NOT user.ward = resource.ward} \\
HC3 & nec. & unsound & 80 & 4 & \texttt{action in {[}'addNote'{]} AND resource.type in HR AND NOT user.uid = resource.patient} \\
HC4 & excl. & unsound & 240 & 6 & \texttt{action in {[}'read'{]} AND resource.type in HRitem AND NOT user.uid = resource.author} \\
HC5 & nec. & unsound & 236 & 8 & \texttt{action in {[}'read'{]} AND resource.type in HRitem AND NOT user.specialties > resource.topics} \\
HC6 & sil. & unsound & 48 & 4 & \texttt{action in {[}'read'{]} AND resource.type in HRitem AND user.position in nurse} \\
HC7 & nec. & unsound & 75 & 8 & \texttt{action in {[}'addItem'{]} AND resource.type in HR AND NOT user.teams {]} resource.treatingTeam} \\
HC8 & sil. & sound & 84 & 0 & \texttt{action in {[}'read'{]} AND resource.type in HR} \\
UN1 & excl. & unsound & 240 & 4 & \texttt{action in {[}'read', 'write'{]} AND resource.type in roster AND NOT user.department in registrar} \\
UN2 & excl. & unsound & 200 & 20 & \texttt{action in {[}'read'{]} AND resource.type in transcript AND NOT user.department in registrar} \\
UN3 & excl. & sound & 480 & 0 & \texttt{action in {[}'read', 'setStatus'{]} AND resource.type in application AND NOT user.department in admissions} \\
UN4 & excl. & sound & 108 & 0 & \texttt{action in {[}'assignGrade'{]} AND resource.type in gradebook AND NOT user.position in faculty} \\
UN5 & nec. & sound & 244 & 0 & \texttt{action in {[}'assignGrade', 'changeScore'{]} AND resource.type in gradebook AND NOT user.crsTaught {]} resource.crs} \\
UN6 & nec. & sound & 252 & 0 & \texttt{action in {[}'checkStatus'{]} AND resource.type in application AND NOT user.uid = resource.student} \\
UN7 & excl. & unsound & 200 & 30 & \texttt{action in {[}'read'{]} AND resource.type in transcript AND NOT user.isChair in True} \\
UN8 & nec. & sound & 120 & 0 & \texttt{action in {[}'readMyScores'{]} AND resource.type in gradebook AND NOT user.crsTaken {]} resource.crs} \\
PM1 & excl. & sound & 144 & 0 & \texttt{action in {[}'write'{]} AND resource.type in \{budget schedule\} AND NOT user.projectsLed {]} resource.project} \\
PM2 & excl. & sound & 144 & 0 & \texttt{action in {[}'read', 'write'{]} AND resource.type in budget AND NOT user.projectsLed {]} resource.project} \\
PM3 & nec. & sound & 592 & 0 & \texttt{action in {[}'setStatus'{]} AND resource.type in task AND NOT user.tasks {]} resource.rid} \\
PM4 & nec. & unsound & 608 & 16 & \texttt{action in {[}'read', 'request'{]} AND resource.type in task AND resource.proprietary in True} \\
PM5 & nec. & sound & 480 & 0 & \texttt{action in {[}'request'{]} AND resource.type in task AND NOT user.expertise > resource.expertise} \\
PM6 & excl. & unsound & 480 & 8 & \texttt{action in {[}'request'{]} AND resource.type in task AND NOT user.isEmployee in True} \\
ED1 & expl. & unsound & 61,932 & 6,489 & \texttt{action in {[}'search', 'send', 'view'{]} AND resource.type in invoice AND NOT user.department endswith Sales} \\
ED2 & expl. & unsound & 53,350 & 1,706 & \texttt{action in {[}'send', 'view'{]} AND resource.type in bankingNote AND NOT user.department endswith ICT} \\
ED3 & expl. & unsound & 27,416 & 1,774 & \texttt{action in {[}'send', 'view'{]} AND resource.type in paycheck AND NOT user.payrollingPermissions in True} \\
ED4 & expl. & unsound & 18,262 & 782 & \texttt{action in {[}'send'{]} AND resource.type in salesOffer AND NOT user.department endswith Sales} \\
ED5 & expl. & unsound & 26,488 & 835 & \texttt{action in {[}'view'{]} AND resource.type in trafficFine AND NOT user.department endswith CustomerCare} \\
ED6 & expl. & unsound & 18,945 & 495 & \texttt{action in {[}'send'{]} AND resource.type in contract AND NOT user.department endswith HR} \\
ED7 & expl. & unsound & 25,532 & 3,681 & \texttt{action in {[}'view'{]} AND resource.type in invoice AND NOT user.department endswith Secretary} \\
ED8 & expl. & unsound & 35,100 & 543 & \texttt{action in {[}'send'{]} AND user.registered in False} \\
ED9 & rep. & sound & 12,000 & 0 & \texttt{action in {[}'send'{]} AND user.registered in False AND user.role in customer} \\
ED10 & nec. & sound & 5,580 & 0 & \texttt{action in {[}'view'{]} AND user.role in admin AND resource.isConfidential in True} \\
ED11 & expl. & unsound & 3,672 & 942 & \texttt{action in {[}'view'{]} AND user.department endswith Audit AND resource.containsPersonalInfo in True} \\
ED12 & expl. & sound & 4,800 & 0 & \texttt{action in {[}'send'{]} AND user.tenant in privateReceiver} \\
ED13 & expl. & sound & 7,769 & 0 & \texttt{action in {[}'view'{]} AND user.role in helpdesk AND NOT user.tenant = resource.tenant} \\
ED14 & expl. & unsound & 16,500 & 2,432 & \texttt{action in {[}'send'{]} AND user.tenant in largeBank AND NOT user.position in \{officeManager seniorOfficeManager\}} \\
ED15 & expl. & unsound & 22,200 & 559 & \texttt{action in {[}'view'{]} AND user.tenant in londonOffice AND NOT user.position in \{director secretary\}} \\
ED16 & expl. & unsound & 18,900 & 837 & \texttt{action in {[}'view'{]} AND user.tenant in reseller AND NOT user.department endswith Customer} \\
ED17 & excl. & unsound & 141,000 & 605 & \texttt{action in {[}'readMetaInfo'{]} AND NOT user.role in helpdesk} \\
WF1 & expl. & unsound & 6,400 & 20 & \texttt{action in {[}'receive'{]} AND resource.type in stockRefillRequest AND NOT user.group in provisioning} \\
WF2 & expl. & sound & 6,840 & 0 & \texttt{action in {[}'receive'{]} AND resource.type in stockRefillRequest AND NOT user.group in techSupport} \\
WF3 & expl. & sound & 77,100 & 0 & \texttt{action in {[}'complete', 'view'{]} AND resource.type in task AND NOT user.department in workforce} \\
WF4 & expl. & unsound & 19,500 & 7,665 & \texttt{action in {[}'complete', 'view'{]} AND resource.type in task AND user.position in technician AND NOT user.uid = resource.assignedTechnician} \\
WF5 & expl. & sound & 28,800 & 0 & \texttt{action in {[}'createOneTimeWorkOrder', 'createRecurrentWorkOrder'{]} AND resource.type in workOrder AND NOT user.position in helpdeskOperator} \\
WF6 & expl. & unsound & 40,242 & 2,613 & \texttt{action in {[}'complete', 'view'{]} AND resource.type in task AND resource.contractStatus in inactive} \\
WF7 & expl. & sound & 2,800 & 0 & \texttt{action in {[}'complete', 'view'{]} AND resource.type in resourceRequest AND user.position in warehouseOperator AND NOT user.uid = resource.assignedEmployee} \\
WF8 & excl. & sound & 6,460 & 0 & \texttt{action in {[}'markComplete'{]} AND resource.type in stockRefillRequest AND NOT user.position in warehouseManager} \\
WF9 & expl. & sound & 3,000 & 0 & \texttt{action in {[}'delete', 'modify'{]} AND resource.type in workOrder AND user.position in helpdeskManager} \\
WF10 & expl. & unsound & 7,500 & 20 & \texttt{action in {[}'createRecurrentWorkOrder'{]} AND user.position in salesManager} \\
\end{longtable}}

\section{Verdicts per restriction sentence and elicitor}
\label{app:rules}

The 31 rule sentences that the lexical rule marks as explicit restrictions, and the verdict of each elicitor's clause(s) read from that sentence: U unsound, S sound, V vacuous, -- not formalised. Where an elicitor wrote several clauses from one sentence, U is reported if any is unsound. Models: L1~Claude Opus 5.5, L2~Claude Sonnet 5.5, L3~Claude Haiku 5.5.

{\small
\begin{longtable}{@{}l r >{\raggedright\arraybackslash}p{6.6cm} c c c c@{}}
\toprule policy & rule & section / subsection & R1 & L1 & L2 & L3 \\ \midrule \endhead
\bottomrule \endfoot
edocument & 1 & eDocs / General & U & U & U & U \\
edocument & 3 & eDocs / Helpdesk & S & S & -- & -- \\
edocument & 7 & Large Bank / Invoices & U & U & U & U \\
edocument & 8 & Large Bank / Banking Notes & U & U & U & U \\
edocument & 9 & Large Bank / Paychecks & U & U & U & U \\
edocument & 10 & Large Bank / Sales Offers & U & U & U & U \\
edocument & 11 & Large Bank / Internal Communication with Local Bank Offices & U & U & U & U \\
edocument & 12 & Large Bank / Audits & U & -- & -- & U \\
edocument & 13 & Large Bank Leasing / Traffic Fines & U & U & U & -- \\
edocument & 14 & Large Bank Leasing / Invoices & -- & U & U & -- \\
edocument & 15 & Local Bank Offices / Applicable Policies & U & U & U & U \\
edocument & 17 & ICT Provider / Invoices & U & U & U & U \\
edocument & 19 & Europe Region / Contracts & U & U & U & U \\
edocument & 23 & Reseller / Subtenant Document Viewing & U & -- & -- & -- \\
edocument & 25 & Registered Private Receivers / Applicable Policies & S & V & V & V \\
workforce & 1 & eWorkforce / eWorkforce Helpdesk & S & S & S & S \\
workforce & 2 & eWorkforce / eWorkforce Helpdesk & -- & S & U & U \\
workforce & 5 & eWorkforce / Internal Workforce & S & S & S & S \\
workforce & 6 & eWorkforce / Internal Workforce & U & U & U & U \\
workforce & 7 & eWorkforce / Internal Workforce & U & -- & -- & -- \\
workforce & 10 & eWorkforce / Internal Warehouse & S & S & S & S \\
workforce & 12 & PowerProtection / Sales Managers & U & S & S & U \\
workforce & 15 & PowerProtection / Stock Refill Notifications & U & S & S & S \\
workforce & 17 & TelCo / Work Orders & -- & V & V & U \\
workforce & 18 & TelCo / Work Orders & -- & S & U & U \\
workforce & 19 & TelCo / Stock Refill Notifications & S & S & S & S \\
workforce & 22 & Helpdesk Supplier / Helpdesk Managers & S & S & S & U \\
workforce & 23 & Workforce Supplier / General & -- & U & U & U \\
workforce & 24 & Workforce Supplier / Technicians & -- & U & U & U \\
workforce & 26 & Subcontractor / General & -- & -- & -- & -- \\
workforce & 27 & Subcontractor / Technicians & -- & -- & -- & -- \\
\end{longtable}}

\section{Elicitation protocol}
\label{app:protocol}

Each independent elicitor was an isolated agent with read access to exactly five files, one per policy, each consisting of the system prompt below followed by a user message holding the policy's prose packet (formal rule blocks stripped and leak-checked), its attribute schema (at most twelve example values per attribute), and the output format. The files are released verbatim. The agents were instructed not to open, list or search any other file, not to use the web, and to report every file read; each reported reading only the five files. One agent wrote its output files with a shell command rather than the file-writing tool; none read anything else. Outputs were parsed and frozen (per-clause fingerprint, UTC timestamp) before any certification.

\paragraph{System prompt (verbatim).}
\begin{quote}\small\ttfamily\raggedright
You are an access-control analyst. You are given a policy's PROSE description and its OBJECT MODEL SCHEMA. You have NOT been given the formal rules, and you must not guess at them.

\noindent Your task: write candidate HARD-DENY clauses. A hard-deny clause is a categorical prohibition that a trusted enforcement layer will apply before any permitting logic: if a request matches the clause, it is denied outright.

\noindent Read every sentence that states or implies a categorical prohibition -- including sentences of the form "only members of X may VERB", which imply that everyone who is not in X is denied VERB. Write one candidate per such sentence.

\noindent Be faithful to the sentence. Do not add guards the sentence does not state, and do not weaken it. Use ONLY attribute names and values that appear in the schema.
\end{quote}

\section{Counterexample attribution by elicitor}
\label{app:attribution}

Section~\ref{sec:attribution}, broken down by elicitor. Fractions are of counterexamples, over the unsound mechanism-stratum clauses whose source rule can be located; ``scoped repair'' counts clauses made sound by a single guard naming the tenant of the clause's own section.

\begin{table}[h]\centering\small\begin{tabular}{@{}lrrrrrc@{}}\toprule
elicitor & clauses & c/ex & own rule & other subsect. & other tenant & scoped repair\\ \midrule
R1 & 16 & 31,393 & 0.000 & 0.783 & 0.592 & 3/16\\
L1 & 17 & 17,192 & 0.034 & 0.819 & 0.661 & 1/18\\
L2 & 22 & 25,939 & 0.023 & 0.878 & 0.552 & 3/22\\
L3 & 22 & 62,930 & 0.055 & 0.905 & 0.661 & 4/23\\
\bottomrule\end{tabular}\end{table}

\section{Composed system on real policies, with intervals}
\label{app:realarms}

Table~\ref{tab:realarms} with 95\% paired-bootstrap intervals over 20 seeds.

\begin{table}[h]\centering\footnotesize\begin{tabular}{@{}lccc@{}}\toprule
E-Document arm & FP & FNG & $H$-grant\\ \midrule
learner & 0.303 [0.294, 0.312] & 0.058 [0.050, 0.068] & 0.415 [0.377, 0.452]\\
learner+shield & 0.284 [0.276, 0.292] & 0.058 [0.050, 0.068] & 0.000 [0.000, 0.000]\\
learner+shield(unchecked H) & 0.080 [0.077, 0.083] & 0.538 [0.528, 0.549] & 0.000 [0.000, 0.000]\\
hybrid & 0.288 [0.279, 0.298] & 0.033 [0.028, 0.040] & 0.395 [0.359, 0.431]\\
hybrid+shield & 0.271 [0.262, 0.279] & 0.033 [0.028, 0.040] & 0.000 [0.000, 0.000]\\
hybrid+shield(unchecked H) & 0.077 [0.074, 0.081] & 0.522 [0.513, 0.534] & 0.000 [0.000, 0.000]\\
conformal@0.01 & 0.010 [0.009, 0.010] & 0.695 [0.665, 0.723] & 0.002 [0.000, 0.005]\\
conformal@0.01+shield & 0.010 [0.009, 0.010] & 0.695 [0.665, 0.723] & 0.000 [0.000, 0.000]\\
\bottomrule\end{tabular}\end{table}

\begin{table}[h]\centering\footnotesize\begin{tabular}{@{}lccc@{}}\toprule
Workforce arm & FP & FNG & $H$-grant\\ \midrule
learner & 0.198 [0.190, 0.206] & 0.031 [0.025, 0.038] & 0.211 [0.187, 0.236]\\
learner+shield & 0.164 [0.156, 0.173] & 0.031 [0.025, 0.038] & 0.000 [0.000, 0.000]\\
learner+shield(unchecked H) & 0.158 [0.150, 0.167] & 0.580 [0.568, 0.593] & 0.000 [0.000, 0.000]\\
hybrid & 0.187 [0.179, 0.194] & 0.046 [0.038, 0.053] & 0.198 [0.175, 0.223]\\
hybrid+shield & 0.155 [0.147, 0.163] & 0.046 [0.038, 0.053] & 0.000 [0.000, 0.000]\\
hybrid+shield(unchecked H) & 0.149 [0.141, 0.157] & 0.590 [0.577, 0.603] & 0.000 [0.000, 0.000]\\
conformal@0.01 & 0.010 [0.009, 0.011] & 0.221 [0.208, 0.237] & 0.002 [0.001, 0.002]\\
conformal@0.01+shield & 0.010 [0.009, 0.010] & 0.221 [0.208, 0.237] & 0.000 [0.000, 0.000]\\
\bottomrule\end{tabular}\end{table}

\section{Selected arm as a function of the escalation price}
\label{app:lambda}

Section~\ref{sec:protocol}: the arm minimising $C(\lambda)$ subject to $\mathrm{FP}\leq0.02$ over all \texttt{gbdt} arms of the synthetic policy.

\begin{table}[h]\centering\small\begin{tabular}{@{}lcccc@{}}\toprule
$\lambda$ interval & arm & FP & DNG & ENG\\ \midrule
$[0.0000, 0.0023)$ & \texttt{selective@0.95} & 0.0000 & 0.0000 & 0.7567\\
$[0.0023, 0.0281)$ & \texttt{selective@0.9} & 0.0001 & 0.0005 & 0.5765\\
$[0.0281, 0.1007)$ & \texttt{selective@0.8} & 0.0013 & 0.0049 & 0.4175\\
$[0.1007, 0.1671)$ & \texttt{selective@0.7} & 0.0041 & 0.0164 & 0.3038\\
$[0.1671, 0.2142)$ & \texttt{selective@0.6} & 0.0094 & 0.0336 & 0.2004\\
$[0.2142, 1.0000)$ & \texttt{hybrid+shield} & 0.0156 & 0.0766 & 0.0000\\
\bottomrule\end{tabular}\end{table}

\section{Shield conformance tests}
\label{app:conformance}

Implementation checks of the synthetic composed system (\url{synthetic/test_shield_conformance.py}). Each is exhaustive over the stated cases; all pass. Checks T2, T3, T5 and T8 concern the synthetic policy's break-glass branch, which Section~\ref{sec:eval} uses only as part of the demonstration.

\begin{table}[h]\centering\small\begin{tabular}{@{}l>{\raggedright\arraybackslash}p{7.6cm}>{\raggedright\arraybackslash}p{4.4cm}l@{}}\toprule
test & property checked & cases & result\\ \midrule
T1 & no action of the shield grants on $H$ (Lemma~\ref{lem:conform}(i)) (attested=False) & 5578 H-states x 5 actions = 27890 cases & PASS\\
T1 & no action of the shield grants on $H$ (Lemma~\ref{lem:conform}(i)) (attested=True) & 5578 H-states x 5 actions = 27890 cases & PASS\\
T2 & break-glass set disjoint from $H$ (attested=False) & n=8000 & PASS\\
T2 & break-glass set disjoint from $H$ (attested=True) & n=8000 & PASS\\
T3 & break-glass set contains no request $P^{*}$ refuses (attested=False) & n=8000, violations=0 & PASS\\
T3 & break-glass set contains no request $P^{*}$ refuses (attested=True) & n=8000, violations=0 & PASS\\
T4 & shield is the identity on $\Sres$ (Lemma~\ref{lem:conform}(ii)) (attested=False) & 1260 cases & PASS\\
T4 & shield is the identity on $\Sres$ (Lemma~\ref{lem:conform}(ii)) (attested=True) & 1260 cases & PASS\\
T5 & shield passes every action through outside $H$ and the break-glass set  & 7995 cases & PASS\\
T6 & equality case of Proposition~\ref{prop:directional} under an always-grant learner  & eps\_H-=0.27468 realised=0.27468 & PASS\\
T6b & the bound of Proposition~\ref{prop:directional} excludes hard-deny false positives  & eps\_H+=0.00000 not in bound & PASS\\
T7 & shield placed after the conformal gate still denies on $H$  & n=4170 & PASS\\
T8 & every break-glass transformation is flagged for audit  & 244 transformations & PASS\\
\bottomrule\end{tabular}\end{table}

\section{Reproducing this paper}
\label{app:repro}

All results are produced by the released artefact (Python~3 with NumPy, SciPy, scikit-learn, pandas and matplotlib) from the public ABAC Lab policy files. \texttt{bash run\_all.sh} regenerates every result file in about 35 minutes on a 2014 laptop CPU (Intel Core i7-4610M) and then runs \texttt{experiments/check\_manuscript.py}, which re-derives every table row and every number quoted in the text from the regenerated files and fails on any disagreement. All experiments are deterministic given their seeds; on the authors' machine a full rerun reproduced every reported value.


\begin{thebibliography}{54}
\footnotesize
\bibitem{hu2014abac} V.~C. Hu \emph{et al.}, ``Guide to attribute based access control (ABAC) definition and considerations,'' Nat. Inst. Standards Technol., Gaithersburg, MD, USA, NIST Special Publication 800-162, 2014, doi: 10.6028/NIST.SP.800-162.
\bibitem{xu2015mining} Z.~Xu and S.~D. Stoller, ``Mining attribute-based access control policies,'' \emph{IEEE Trans. Dependable Secure Comput.}, vol.~12, no.~5, pp.~533--545, 2015.
\bibitem{cotrini2018sparse} C.~Cotrini, T.~Weghorn, and D.~Basin, ``Mining ABAC rules from sparse logs,'' in \emph{Proc. IEEE Eur. Symp. Secur. Privacy (EuroS\&P)}, 2018, pp.~31--46, doi: 10.1109/EuroSP.2018.00011.
\bibitem{karimi2022automatic} L.~Karimi, M.~Aldairi, J.~Joshi, and M.~Abdelhakim, ``An automatic attribute-based access control policy extraction from access logs,'' \emph{IEEE Trans. Dependable Secure Comput.}, vol.~19, no.~4, pp.~2304--2317, 2022, doi: 10.1109/TDSC.2021.3054331.
\bibitem{shang2024hierarchical} S.~Shang, X.~Wang, and A.~Liu, ``ABAC policy mining method based on hierarchical clustering and relationship extraction,'' \emph{Comput. Secur.}, vol.~139, 2024, Art. no.~103717, doi: 10.1016/j.cose.2024.103717.
\bibitem{guo2025onthefly} Y.~Guo, X.~Wang, M.~Yu, F.~Li, Z.~Pang, and L.~Fang, ``An on-the-fly framework for usable access control policy mining,'' \emph{Comput. Secur.}, vol.~150, 2025, Art. no.~104211, doi: 10.1016/j.cose.2024.104211.
\bibitem{aboukadri2024survey} S.~Aboukadri, A.~Ouaddah, and A.~Mezrioui, ``Machine learning in identity and access management systems: Survey and deep dive,'' \emph{Comput. Secur.}, vol.~139, 2024, Art. no.~103729, doi: 10.1016/j.cose.2024.103729.
\bibitem{nobi2025sok} M.~N. Nobi, M.~Gupta, R.~Krishnan, M.~S. Rana, L.~Praharaj, and M.~Abdelsalam, ``Machine learning in access control: A taxonomy,'' in \emph{Proc. 30th ACM Symp. Access Control Models Technol. (SACMAT)}, 2025, pp.~145--156, doi: 10.1145/3734436.3734453.
\bibitem{nobi2022deep} M.~N. Nobi, R.~Krishnan, Y.~Huang, M.~Shakarami, and R.~Sandhu, ``Toward deep learning based access control,'' in \emph{Proc. 12th ACM Conf. Data Appl. Secur. Privacy (CODASPY)}, 2022, pp.~143--154, doi: 10.1145/3508398.3511497.
\bibitem{cappelletti2019quality} L.~Cappelletti, S.~Valtolina, G.~Valentini, M.~Mesiti, and E.~Bertino, ``On the quality of classification models for inferring ABAC policies from access logs,'' in \emph{Proc. IEEE Int. Conf. Big Data}, 2019, pp.~4000--4007, doi: 10.1109/BigData47090.2019.9005959.
\bibitem{guo2017calibration} C.~Guo, G.~Pleiss, Y.~Sun, and K.~Q. Weinberger, ``On calibration of modern neural networks,'' in \emph{Proc. 34th Int. Conf. Mach. Learn. (ICML)}, 2017, pp.~1321--1330.
\bibitem{biggio2018wild} B.~Biggio and F.~Roli, ``Wild patterns: Ten years after the rise of adversarial machine learning,'' \emph{Pattern Recognit.}, vol.~84, pp.~317--331, 2018.
\bibitem{llamas2025adversarial} J.~Mart\'{\i}nez~Llamas, M.~Van~Hoof, D.~Preuveneers, and W.~Joosen, ``Adversarial robustness of machine learning-based access control,'' in \emph{Availability, Reliability and Security (ARES 2025)}, ser. LNCS, vol.~15994. Cham, Switzerland: Springer, 2025, pp.~88--105, doi: 10.1007/978-3-032-00630-1\_6.
\bibitem{anderson1972} J.~P. Anderson, ``Computer security technology planning study,'' USAF Electron. Syst. Div., Bedford, MA, USA, Tech. Rep. ESD-TR-73-51, 1972.
\bibitem{schneider2000enforceable} F.~B. Schneider, ``Enforceable security policies,'' \emph{ACM Trans. Inf. Syst. Secur.}, vol.~3, no.~1, pp.~30--50, 2000.
\bibitem{ligatti2009runtime} J.~Ligatti, L.~Bauer, and D.~Walker, ``Run-time enforcement of nonsafety policies,'' \emph{ACM Trans. Inf. Syst. Secur.}, vol.~12, no.~3, 2009, Art. no.~19.
\bibitem{bloem2015shield} R.~Bloem, B.~K\"{o}nighofer, R.~K\"{o}nighofer, and C.~Wang, ``Shield synthesis: Runtime enforcement for reactive systems,'' in \emph{Proc. Tools Algorithms Construction Anal. Syst. (TACAS)}, ser. LNCS, vol.~9035, 2015, pp.~533--548, doi: 10.1007/978-3-662-46681-0\_51.
\bibitem{alshiekh2018safe} M.~Alshiekh, R.~Bloem, R.~Ehlers, B.~K\"{o}nighofer, S.~Niekum, and U.~Topcu, ``Safe reinforcement learning via shielding,'' in \emph{Proc. AAAI Conf. Artif. Intell.}, vol.~32, no.~1, 2018, doi: 10.1609/aaai.v32i1.11797.
\bibitem{sha2001simplicity} L.~Sha, ``Using simplicity to control complexity,'' \emph{IEEE Softw.}, vol.~18, no.~4, pp.~20--28, 2001.
\bibitem{iyer2018mining} P.~Iyer and A.~Masoumzadeh, ``Mining positive and negative attribute-based access control policy rules,'' in \emph{Proc. 23rd ACM Symp. Access Control Models Technol. (SACMAT)}, 2018, pp.~161--172, doi: 10.1145/3205977.3205988.
\bibitem{iyer2022learning} P.~Iyer and A.~Masoumzadeh, ``Learning relationship-based access control policies from black-box systems,'' \emph{ACM Trans. Privacy Secur.}, vol.~25, no.~3, 2022, Art. no.~22, doi: 10.1145/3517121.
\bibitem{abujabal2020polisma} A.~Abu~Jabal, E.~Bertino, J.~Lobo, M.~Law, A.~Russo, S.~Calo, and D.~Verma, ``Polisma---A framework for learning attribute-based access control policies,'' in \emph{Computer Security -- ESORICS 2020}, ser. LNCS. Cham, Switzerland: Springer, 2020, pp.~523--544, doi: 10.1007/978-3-030-58951-6\_26.
\bibitem{zhang2026domain} S.~Zhang and P.~W.~L. Fong, ``Mining domain-based policies from massive and noisy access logs,'' in \emph{Proc. 31st ACM Symp. Access Control Models Technol. (SACMAT)}, 2026, pp.~2--13, doi: 10.1145/3750555.3811887.
\bibitem{liu1999msapriori} B.~Liu, W.~Hsu, and Y.~Ma, ``Mining association rules with multiple minimum supports,'' in \emph{Proc. 5th ACM SIGKDD Int. Conf. Knowl. Discovery Data Mining (KDD)}, 1999, pp.~337--341, doi: 10.1145/312129.312274.
\bibitem{weiss2004rarity} G.~M. Weiss, ``Mining with rarity: A unifying framework,'' \emph{ACM SIGKDD Explor. Newsl.}, vol.~6, no.~1, pp.~7--19, 2004, doi: 10.1145/1007730.1007734.
\bibitem{fisler2005margrave} K.~Fisler, S.~Krishnamurthi, L.~A. Meyerovich, and M.~C. Tschantz, ``Verification and change-impact analysis of access-control policies,'' in \emph{Proc. 27th Int. Conf. Softw. Eng. (ICSE)}, 2005, pp.~196--205, doi: 10.1145/1062455.1062502.
\bibitem{hughes2008xacml} G.~Hughes and T.~Bultan, ``Automated verification of access control policies using a SAT solver,'' \emph{Int. J. Softw. Tools Technol. Transf.}, vol.~10, no.~6, pp.~503--520, 2008, doi: 10.1007/s10009-008-0087-9.
\bibitem{turkmen2017smt} F.~Turkmen, J.~den~Hartog, S.~Ranise, and N.~Zannone, ``Formal analysis of XACML policies using SMT,'' \emph{Comput. Secur.}, vol.~66, pp.~185--203, 2017.
\bibitem{backes2018zelkova} J.~Backes \emph{et al.}, ``Semantic-based automated reasoning for AWS access policies using SMT,'' in \emph{Proc. Formal Methods Comput.-Aided Des. (FMCAD)}, 2018, pp.~1--9.
\bibitem{eiers2022quacky} W.~Eiers, G.~Sankaran, A.~Li, E.~O'Mahony, B.~Prince, and T.~Bultan, ``Quacky: Quantitative access control permissiveness analyzer,'' in \emph{Proc. 37th IEEE/ACM Int. Conf. Automated Softw. Eng. (ASE)}, 2022, pp.~1--5, doi: 10.1145/3551349.3559530.
\bibitem{cutler2024cedar} J.~W. Cutler \emph{et al.}, ``Cedar: A new language for expressive, fast, safe, and analyzable authorization,'' \emph{Proc. ACM Program. Lang.}, vol.~8, no.~OOPSLA1, pp.~670--697, 2024, doi: 10.1145/3649835.
\bibitem{iyer2022expressive} P.~Iyer, A.~Masoumzadeh, and P.~Narendran, ``On the expressive power of negated conditions and negative authorizations in access control models,'' \emph{Comput. Secur.}, vol.~116, 2022, Art. no.~102586, doi: 10.1016/j.cose.2021.102586.
\bibitem{alohaly2019nlp} M.~Alohaly, H.~Takabi, and E.~Blanco, ``Automated extraction of attributes from natural language attribute-based access control (ABAC) policies,'' \emph{Cybersecurity}, vol.~2, no.~1, 2019, doi: 10.1186/s42400-018-0019-2.
\bibitem{abdelgawad2023synthesizing} M.~Abdelgawad, I.~Ray, S.~Alqurashi, V.~Venkatesha, and H.~Shirazi, ``Synthesizing and analyzing attribute-based access control model generated from natural language policy statements,'' in \emph{Proc. 28th ACM Symp. Access Control Models Technol. (SACMAT)}, 2023, pp.~91--98, doi: 10.1145/3589608.3593844.
\bibitem{yang2025llmabac} M.~Yang, V.~Atluri, S.~Sural, and A.~Kundu, ``Extraction of machine enforceable ABAC policies from natural language text using LLM knowledge distillation,'' in \emph{Proc. 30th ACM Symp. Access Control Models Technol. (SACMAT)}, 2025, pp.~157--168, doi: 10.1145/3734436.3734447.
\bibitem{vatsa2025synthesizing} A.~Vatsa, P.~Patel, and W.~Eiers, ``Synthesizing access control policies using large language models,'' in \emph{Proc. IEEE/ACM Int. Workshop Natural Language-Based Softw. Eng. (NLBSE)}, 2025, pp.~13--16, doi: 10.1109/NLBSE66842.2025.00008.
\bibitem{bui2026llmmining} T.~Bui and E.~Martinez~Mota, ``Mining attribute-based access control policies with large language models,'' in \emph{Proc. 31st ACM Symp. Access Control Models Technol. (SACMAT)}, 2026, pp.~45--51, doi: 10.1145/3750555.3811907.
\bibitem{alhammuri2023zerotrust} K.~Al-hammuri, F.~Gebali, A.~Kanan, M.~Mamun, S.~M. Hazratifard, and H.~Alfar, ``Zero trust context-aware access control framework for IoT devices in healthcare cloud AI ecosystem,'' TechRxiv preprint, 2023, doi: 10.36227/techrxiv.23716221.
\bibitem{buccafurri2024enforcing} F.~Buccafurri, V.~De~Angelis, S.~Lazzaro, and A.~Pugliese, ``Enforcing security policies on interacting authentication systems,'' \emph{Comput. Secur.}, vol.~140, 2024, Art. no.~103771, doi: 10.1016/j.cose.2024.103771.
\bibitem{wang2025opmonitor} X.~Wang, Y.~Guo, Z.~Sun, M.~Yu, F.~Li, and L.~Fang, ``OPMonitor: Continuously monitoring residual over-granted permissions in verified access control policies,'' \emph{Comput. Secur.}, vol.~158, 2025, Art. no.~104623, doi: 10.1016/j.cose.2025.104623.
\bibitem{madkaikar2025queuing} G.~Madkaikar, K.~S.~M. Yelisetty, S.~Sural, J.~Vaidya, and V.~Atluri, ``Performance analysis of dynamic ABAC systems using a queuing theoretic framework,'' \emph{Comput. Secur.}, vol.~154, 2025, Art. no.~104432, doi: 10.1016/j.cose.2025.104432.
\bibitem{chow1970} C.~K. Chow, ``On optimum recognition error and reject tradeoff,'' \emph{IEEE Trans. Inf. Theory}, vol.~16, no.~1, pp.~41--46, 1970.
\bibitem{geifman2017selective} Y.~Geifman and R.~El-Yaniv, ``Selective classification for deep neural networks,'' in \emph{Proc. Adv. Neural Inf. Process. Syst. (NeurIPS)}, 2017, pp.~4878--4887.
\bibitem{vovk2005alrw} V.~Vovk, A.~Gammerman, and G.~Shafer, \emph{Algorithmic Learning in a Random World}. New York, NY, USA: Springer, 2005.
\bibitem{shafer2008tutorial} G.~Shafer and V.~Vovk, ``A tutorial on conformal prediction,'' \emph{J. Mach. Learn. Res.}, vol.~9, pp.~371--421, 2008.
\bibitem{vovk2012conditional} V.~Vovk, ``Conditional validity of inductive conformal predictors,'' in \emph{Proc. Asian Conf. Mach. Learn. (ACML)}, ser. PMLR, vol.~25, 2012, pp.~475--490.
\bibitem{angelopoulos2023gentle} A.~N. Angelopoulos and S.~Bates, ``Conformal prediction: A gentle introduction,'' \emph{Found. Trends Mach. Learn.}, vol.~16, no.~4, pp.~494--591, 2023.
\bibitem{angelopoulos2024crc} A.~N. Angelopoulos, S.~Bates, A.~Fisch, L.~Lei, and T.~Schuster, ``Conformal risk control,'' in \emph{Proc. Int. Conf. Learn. Represent. (ICLR)}, 2024.
\bibitem{natarajan2013noisy} N.~Natarajan, I.~S. Dhillon, P.~Ravikumar, and A.~Tewari, ``Learning with noisy labels,'' in \emph{Proc. Adv. Neural Inf. Process. Syst. (NIPS)}, 2013, pp.~1196--1204.
\bibitem{frenay2014survey} B.~Fr\'{e}nay and M.~Verleysen, ``Classification in the presence of label noise: A survey,'' \emph{IEEE Trans. Neural Netw. Learn. Syst.}, vol.~25, no.~5, pp.~845--869, 2014.
\bibitem{bui2025abaclab} T.~Bui, A.~Matricia, E.~Contreras, R.~Mauvais, L.~Medina, and I.~Serrano, ``ABAC Lab: An interactive platform for attribute-based access control policy analysis, tools, and datasets,'' in \emph{Proc. 30th ACM Symp. Access Control Models Technol. (SACMAT)}, 2025, pp.~111--116, doi: 10.1145/3734436.3734441. [Online]. Available: \url{https://github.com/ABAC-Lab-Admin/ABAC-Lab}
\bibitem{decat2014edocument} M.~Decat, J.~Bogaerts, B.~Lagaisse, and W.~Joosen, ``The e-document case study: Functional analysis and access control requirements,'' Dept. Comput. Sci., KU Leuven, Leuven, Belgium, CW Rep. CW654, 2014.
\bibitem{brucker2009breakglass} A.~D. Brucker and H.~Petritsch, ``Extending access control models with break-glass,'' in \emph{Proc. 14th ACM Symp. Access Control Models Technol. (SACMAT)}, 2009, pp.~197--206, doi: 10.1145/1542207.1542239.
\bibitem{rostad2006audit} L.~R{\o}stad and O.~Edsberg, ``A study of access control requirements for healthcare systems based on audit trails from access logs,'' in \emph{Proc. 22nd Annu. Comput. Secur. Appl. Conf. (ACSAC)}, 2006, pp.~175--186, doi: 10.1109/ACSAC.2006.8.
\end{thebibliography}
\end{document}